\newcommand{\ITRS}[2]{#1} \newcommand{\Ignore}[2]{#2}  
\newif\ifpaper \paperfalse
\newif\ifmycolour \mycolourfalse
\def\sk{\,}
\def \TurnLmu{\Turn}
 \let\oldproof\proof
 \let\oldendproof\endproof
 \renewenvironment{proof}{\collectbody\Comment}{}
 \def \ssubscr{\hbox{\relsize{-2}{\sc s}}}
 \def \esubscr{\hbox{\relsize{-2}{\sc e}}}
 \def \seqS {\mathbin{\leq_{\ssubscr}}}
 \def \reduces {\mathrel{\rightarrow}}
 \def \betamusubscript {_{`b`m}}
 \def \redbmu {\mathbin{\semcolour \arr \betamusubscript}}
 \renewcommand{\rtcredbmu}[1][\ast]{\mathrel{\comparr{#1}{\betamusubscript}}}
 \def \eqbmu {\mathbin{\semcolour =\betamusubscript}}
 \def \dcol{\mathrel{::}}
 \def \App {\textit{App}}
 \def \Abs {\textit{Abs}}
 \def \TurnLmuInt {\mathbin{{\vdash}\kern-2\point_{\wedge}}}
 \def \TurnLmu {\mathbin{{\vdash}}}
 \def \TurnLmuBot {\mathbin{{\vdash}\kern-2\point_{\perp}}}
 \def \derlmuBot{ \def \TurnLmu {\TurnLmuBot} \derLmu }
 \def \TurnlmuS {\mathbin{{\vdash}\kern-2\point_{\ssubscr}}}
 \def \notTurnlmuS {\mathbin{{\not\vdash}\kern-2\point_{\ssubscr}}}
 \def \derlmuS{ \def \TurnLmu {\TurnlmuS} \derLmu }
 \def \TurnlmuSN {\mathbin{{\vdash}\kern-2\point_{\hbox{\scriptsize \sc sn}}}}
 \def \derlmuSN{ \def \TurnLmu {\TurnlmuSN} \derLmu }
 \def \TurnlmuE {\mathbin{{\vdash}\kern-2\point_{\esubscr}}}
 \def \notTurnlmuE {\mathbin{{\not\vdash}\kern-2\point_{\esubscr}}}
 \def \derlmuE{ \def \TurnLmu {\TurnlmuE} \derLmu }
 \def \TurnlmuIU {\mathbin{{\vdash}\kern-2\point_{\cap\cup}}}
 \def \TurnLmu {\Turn}
 \def \Rename{\textsc{Ren}}
 \def\interBot{\,{\inter}_{\perp}}
 \def \BuildInt(#1){{\cap}_{#1}\qskp}
 \def \BuildInt({\ifnextchar{I}{\BuildIntI(}{\ifnextchar{J}{\BuildIntJ(}{\ifnextchar{n}{\BuildIntn(}{\ifnextchar{m}{\BuildIntm(}{\BuildIntAux(}}}}} 
 \def \BuildIntAux(#1)#2{\int{#1}#2_i}
 \def \BuildIntI(#1)#2{\int{I}#2_i}
 \def \BuildIntJ(#1)#2{\int{J}#2_j}
 \def \BuildIntn(#1)#2{\int{\n}#2_i}
 \def \BuildIntm(#1)#2{\int{\m}#2_j}
 \def \derred {\mathrel{\arr _{\mbox{\scriptsize \sc Der}}}}
 \def \rtcderred {\mathrel{\arr ^{*}_{\mbox{\scriptsize \sc Der}}}}
 \def \SN(#1){\textsl{SN}\psk(#1)}
 \def \CompBr(#1){\textsl{Comp}\,(#1)}
 \def \MojM#1{\join_{\ul{#1}}M_i}
 \def\Appr #1 |- #2 : #3 | #4 {\Exists A \ele \SetAppr(#2) \Pred[\derlmuS #1 |- A : #3 | #4 ]}
 \def \Redex{\mbox{\textsf{R}}}
 \def \LorBr(#1){\textit{lor}\psk(#1)}
 \def \LorNoBr#1{\textit{lor}\psk(#1)}
 \def \Lor{\ifnextchar({\LorBr}{\LorNoBr}}
 \def \lored{\mathrel{\rightarrow_{\textit{\scriptsize lor}}}}
 \def \redsize(#1){\#\,{#1}}
\title{Characterisation of Approximation and \\ (Head) Normalisation for $`l`m$ ~\\ using Strict Intersection Types}
\author{Steffen van Bakel
\institute{Department of Computing, Imperial College London, 
180 Queen's Gate, London SW7 2BZ, UK}
}
\date{Extended Abstract}
\begin{document}

\ITRS{\maketitle}{} 

 \begin{abstract}
We study the strict type assignment for $\lmu$ that is presented in \cite{Bakel-FSCD'16}.
We define a notion of approximants of $\lmu$-terms, show that it generates a semantics, and that for each typeable term there is an approximant that has the same type.
We show that this leads to a characterisation via assignable types for all terms that have a head normal form, and to one for all terms that have a normal form, as well as to one for all terms that are strongly normalisable.
 \end{abstract}

\ITRS{}{\maketitle} 
 
 \section*{Introduction}

The Intersection Type Discipline \cite{BCD'83} is an extension of the standard, implicative type assignment known as Curry's system \cite{Curry-Feys'58} for the $`l$-calculus \cite{Church'36,Barendregt'84}; the extension made consists of relaxing the requirement that a parameter for a function should have a single type, adding the type constructor $\inter$ next to $\arrow$.
This simple extension allows for a great leap in complexity: not only can a (filter) model be built for the {\LC} using intersection types, also strong normalisation (termination) can be characterised via assignable types; however, type assignment becomes undecidable.

A natural question is whether or intersection type assignment yields a semantics also for other calculi, like $\lmu$ \cite{Parigot'92}.
To answer that, in \cite{Bakel-Barbanera-Liguoro-TLCA'11,BakBdL-ITRS12,Bakel-Barbanera-Liguoro-LMCS'15} a notion of intersection type assignment was defined for $\lmu$ that is a variant of the union-intersection system defined in \cite{Bakel-ITRS'10}.
Inspired by Streicher and Reus's domain \cite{Streicher-Reus'98}, $\lmu$-terms are separated into terms and \emph{streams}; then $\lmu$'s names act as the destination of streams, the same way variables are the destination of terms.
A type theory is defined following the domain construction; 
the main results for that system are the definition of a filter model, closure under conversion, and that the system is an extension of Parigot's \cite{Bakel-Barbanera-Liguoro-TLCA'11}; and that, in a restricted system, the terms that are typeable are exactly the strongly normalising ones \cite{BakBdL-ITRS12}.

One of the main disadvantages of taking the domain-directed approach to type assignment is that, naturally, intersection becomes a `top level' type constructor, that lives at the same level as arrow, for example, which induces a contra-variant type inclusion relation `$\seq$' and type assignment rule $(\seq)$ that greatly hinder proofs and gives an intricate generation lemma.
This problem is addressed in \cite{Bakel-FSCD'16} where a \emph{strict} version of the system of \cite{Bakel-Barbanera-Liguoro-LMCS'15} is defined, in the spirit of that of \cite{Bakel-TCS'92,Bakel-ACM'11} that allows for more easily constructed proofs.
The main restriction with respect to the system of \cite{Bakel-Barbanera-Liguoro-LMCS'15} is limiting the type  inclusion relation to a relation that is no longer contra-variant, and allows only for the selection of a component of an intersection type; this is accompanied by a restriction of the type language, essentially no longer allowing intersection on the right of an arrow.
The main results shown in \cite{Bakel-FSCD'16} are that the system is closed under conversion (i.e.~under reduction and expansion), and that all terms typeable in a system that excludes the type constant $`w$ are strongly normalisable.
To that aim it shows that, in this system, cut-elimination is strongly normalisable, using the technique of derivation reduction \cite{Bakel-NDJFL'04} (see also \cite{Bakel-TCS'08,Bakel-ACM'11}).

In this paper, we will elaborate further on the strict system. 
As in \cite{Bakel-TCS'08,Bakel-ACM'11}, in this paper we will show that the fact that derivation reduction is strongly normalisable also here leads to an approximation result. 
For that, we define a notion of approximation for $\lmu$, and show that this yields a semantics (Thm.\,\ref{approx seman lmux}).
We then show that for every typeable term there exists an approximant of that term that can be assigned exactly the same types (Thm.\,\ref{approximation result for TurnlmuS}).
We then show that this approximation result naturally gives a characterisation of head normalisation (Thm\,\ref{characterisation of head normal form}), as well as a characterisation of normalisation (Thm\,\ref{characterisation of normalisation}).
We also revisit the proof of characterisation of strong normalisation of terms through the assignable types (Thm\,\ref{SN theorem}), which thanks to the approximation result has a more elegant proof. 

Because of the restricted available space, most of the (full) proofs are not presented here.
A version of this paper with the proofs added in an appendix can be found at \url{www.doc.ic.ac.uk/~svb/Research/Papers/ITRS16wapp.pdf}. 

\emptyline
\mysfbf{Note:}
We will write $\n$ for the set $\{1,\ldots,n\}$ and use a vector notation for the abbreviation of sequences, so 
write $\Vect[n]{X}$ for $X_1, \ldots, X_n$, and $\Vect{X}$ if the number of elements in the sequence is not important. 

 \section{The $`l`m$-calculus} \label{lambda mu section}

In this section we present Parigot's pure $`l`m$-calculus as introduced in \cite{Parigot'92}.
It is an extension of the untyped $`l$-calculus obtained by adding \emph{names} and a name-abstraction operator $`m$ and was intended as a proof calculus for a fragment of classical logic. 
Derivable statements have the shape $\derLmu `G |- M : A | `D $, where $A$ is the main (\emph{active}) conclusion of the statement, and $`D $ contains the alternative conclusions, consisting of pairs of names and types; the left-hand context $`G$, as usual, is a mapping from term variables to types, and represents the assumptions about free variables of $M$.

 \begin{definition} [Term Syntax \cite{Parigot'92}] \label{def:terms}
Let $x,y,z,\ldots $ range over \emph{term variables}, and $`a,`b,`g,`d,\ldots$ range over  \emph{names}.
The \emph{terms}, ranged over by $M,N,P,Q,\ldots$ are defined by the grammar:
 \[ \begin{array}{rcl}
M,N & ::= &x \mid `l y.M \mid MN \mid `m`a.[`b]M
 \end{array} \]

 \end{definition}

As usual, we consider $`l$ and $`m$ to be binders; the sets $\fv(M)$ and $\fn(M)$ of, respectively, \emph{free variables} and \emph{free names} in a term $M$ are defined in the usual way.
We adopt Barendregt's convention on terms, and will assume that free and bound variables and names are different.

 \begin{definition} [Substitution \cite{Parigot'92}] \label{def:substitution}
Substitution takes two forms:%
 \[ \begin{array}{l@{\hspace{2mm}}cll}
 \textit{term substitution:} & M[N/x] & \textrm{($N$ is substituted for $x$ in $M$)}
 \\
 \textit{structural substitution:} & M [L{`.}`g / `a] &
			\textrm{(every `subterm' $[`a]N$ of $M$ is replaced by $[`g]NL$)}
 \end{array} \]
\Comment{
More precisely, $M [L{`.}`g / `a]$ is defined by:
 \[ \begin{array}{rcll}
(`m`b.[`a]P) [L{`.}`g / `a] &\ByDef& `m`b.[`g]`( P [L{`.}`g / `a] `) \, L
	\\ [1mm]
(`m`b.[`d]P) [L{`.}`g / `a] & \ByDef & `m`b.[`d ] P [L{`.}`g / `a] \dquad \textrm{if $`a\neq`d $}
	\\ [1mm]
x [L{`.}`g / `a] & \ByDef & x
	\\ [1mm]
(`l x.P) [L{`.}`g / `a] & \ByDef & `l x.P [L{`.}`g / `a]
	\\ [1mm]
(PQ) [L{`.}`g / `a] & \ByDef & ({P [L{`.}`g / `a] }) \, ({ Q [L{`.}`g / `a] })
 \end{array} \]
}As usual, both substitutions are capture avoiding, using $`a$-conversion when necessary.

\Comment{
Notice that, in the first two alternatives, by Barendregt's convention we can assume that $`b \not= `a$.
}

 \end{definition}

 \begin{definition}[Reduction \cite{Parigot'92}]\label{def:reduction}
Reduction in $\lmu$ is based on the following rules:
 \[ \begin{array}{r@{\dquad\dquad}rll@{\dquad}l}
(`b ): & (`l x.M)N & \reduces & M[N/x] & (\textit{logical reduction}) \\
(`m)\footnotemark
: & \multicolumn{3}{l}{\kern-8.25mm
\begin{cases}{@{}rcll}
(`m`b .[`b]P)Q &\reduces& `m`g . [`g]`(P[Q{`.}`g / `b]`)Q \\
(`m`b .[`d]P)Q &\reduces& `m`g . [`d]P[Q{`.}`g / `b], & \textrm{if } `d \not= `b \\
 \end{cases}
} & (\textit{structural reduction}) \\
(\Rename): &
\multicolumn{3}{l}{\kern-13.75mm
\begin{cases}{@{}rcl@{\dquad}l}
`m`a.[`b]`m`g .[`g]M & \reduces & `m`a.[`b]M[`b/`g] \\
`m`a.[`b]`m`g .[`d]M & \reduces & `m`a.[`d]M[`b/`g], & \hspace*{6.25mm} \textrm{if } `d \not= `g
 \end{cases}
} & (\textit{renaming}) 
 \end{array} \]
\footnotetext
{
A more common notation for the second rule, for example, would be $(`m`b.[`d]M)N \reduces `m`b.[`d]M [N / `b]$.
This implicitly uses the fact that $`b$ disappears during reduction, and through $`a$-conversion can be picked as name for the newly created applications instead of $`g$.
But, in fact, this is not the same $`b$ (and the named term has changed), as reflected in the fact that its type changes during reduction. 
Moreover, when making the substitution \emph{explicit} as in \cite{Bakel-Vigliotti-CLaC'14}, it becomes clear that this other approach in fact is a short-cut, which our definition does without. }
We write $M \redbmu N$ for the reduction relation that is the compatible closure of these rules, and $\eqbmu$ for the equivalence relation generated by it.
 \end{definition}
Confluence for this notion of reduction has been shown in \cite{Py-PhD'98}.

We will need the concept of head-normal form for $\lmu$, which is defined as follows:

 \begin{definition}[Head-normal forms]
The $\lmu$ \emph{head-normal forms} 
(with respect to our notion of reduction $\redbmu$) are defined through the grammar:%
 \[ \begin {array}{rrl@{\quad}l}
\lmuHNF & ::= &
xM_1\dots M_n & (n \geq 0)
	\\ &\mid &
`lx. \lmuHNF 
	\\ &\mid &
\muterm`a.[`b] \lmuHNF
	& (\lmuHNF \not= \muterm`g.[`d] \lmuHNF' )
 \end {array} \]
 \end{definition}

 \section{Strict type assignment} \label{Strict type assignment}

Intersection (and union) type assignment for $\lmu$ was first defined in \cite{Bakel-ITRS'10}; this was followed by \cite{Bakel-Barbanera-Liguoro-TLCA'11}, in which an intersection type theory is developed departing from Streicher and Reus's domain construction \cite{Streicher-Reus'98}.
Terms can be typed with functional types $`d$ and streams by continuation types $`k$ that are of the shape $`d_1\prod \dots \prod `d_n \prod `w$, so essentially is a sequence of $`d$s. 
This later \cite{BakBdL-ITRS12} was followed by the proof that, as for the $`l$-calculus, the underlying intersection type system for $\lmu$ allows for the full characterisation of strongly normalisable terms; in that paper, renaming is not considered.
These papers were later combined (and revised) into \cite{Bakel-Barbanera-Liguoro-LMCS'15}.
One of the main disadvantages of taking the domain-directed approach to type assignment is that, naturally, intersection becomes a `top level' type constructor, that lives at the same level as arrow, for example.
This in itself is not negative, since it gives readable types and easy-to-understand type assignment rules, but it also induces a contra-variant type inclusion relation `$\seq$' and type assignment rule $(\seq)$ that 
hinder proofs and give an intricate generation lemma (see \cite{Bakel-Barbanera-Liguoro-LMCS'15} for details).

Therefore, in \cite{Bakel-FSCD'16}, a strict restriction of the system of \cite{Bakel-Barbanera-Liguoro-LMCS'15} was presented, where the occurrence of intersections is limited to only appear as components of continuation types (so no intersections of continuation types), and type inclusion is no longer contra-variant and only allows for the selection of a component in an intersection type.
It also uses $`W$ rather than $`w$ to mark the end of a continuation type. 
But, more importantly, it removed the inference rule $(\seq)$, and changed the type assignment rules to explicitly state when a $\seq$-step is allowed, as in rule $(\Ax)$.

This system is defined as follows:

 \begin{definition}[Strict Types \cite{Bakel-FSCD'16}] \label{strict types}
 \begin{enumerate}
 \firstitem
Let $`y$ range over a countable, infinite set of type constants.
We define our strict types by the grammar:
 \[ \begin{array}{rcl@{\quad}l@{\quad}l}
`A,`B &::=& `C\arr `y && \textit{basic types} \\
`R,`S,`T &::=& `w \mid `A_1 \inter \dots \inter `A_n & (n \geq 1) & \textit{intersection types} \\
`C,\cont{D} &::=& `W \mid `S\prod `C 
&& \textit{continuation types}
 \end{array} \]

 \item
On strict types, the type inclusion relation $\seqS$ is the smallest partial order satisfying the rules:%
\def\colw{\hspace{6mm}}
 \[ \begin{array}{c@{\colw}c@{\colw}c@{\colw}c@{\colw}c}
\Inf	[j \ele \n, ~ n \geq 1]
	{ `A_1 \inter \dots \inter `A_n \seq `A_j }
	&
\Inf	[n \geq 1]
	{ `S \seq `A_i \quad (\forall i \ele \n)
	}{ `S \seq `A_1 \inter \dots \inter `A_n }
	&
\Inf	{ `S \seq `w }
	&
\Inf	{ `C \seq `W}
	&
\Inf	{`S \seq`T \quad `C \seq \cont{D}
	}{ `S\prod `C \seq `T\prod \cont{D} }
 \end{array} \]

 \end{enumerate}
 \end{definition}

For convenience, we will write $\AoI{I}$ for $`A_{i_1} \inter \dots \inter `A_{i_n}$ where $I = \Set{i_1, \ldots, i_n}$, $\AoI{\emptyset}$ for $`w$, so the second and third rule combine to 
\[
\Inf	[n \geq 0]
	{ `S \seq `A_i \quad (\forall i \ele \n)
	}{ `S \seq `A_1 \inter \dots \inter `A_n }
\]
and $\AoI{\n}$ for $`A_1 \inter \ldots \inter`A_n$.
Notice that for any continuation type $`C$ there are $n \geq 0$ and $`S_i ~ (i \ele \n)$ such that $`C = `S_1\prod \dots `S_n \prod `W$.

 \begin{definition}[Strict Type Assignment \cite{Bakel-FSCD'16}] \label{strict type assignment for lmu}
 \begin{enumerate}

 \firstitem
A {\em variable context} $`G$ is a mapping from term variables to intersection types, denoted as a finite set of {\em statements} $x{:}`S$, such that the {\em subject} of the statements ($x$) are distinct.

 \item
We write $`G,x{:}`S$ for the context defined by:
\[
\begin {array}{rcll}
`G, x{:}`S & \ByDef & `G \Union \{x{:}`S\}, & \textrm{if $`G$ is not defined on $x$}
\\
	& \ByDef & `G, & \textrm{if }x{:}`S \ele `G 
\end {array} %
\]
We write $x \notele `G$ if there exists no $`S$ such that $x{:}`S \ele `G$.

 \item
\emph{Name contexts} $`D$ and the notions $`a{:}`C,`D$ and $`a \notele `D$ are defined in a similar way.

 \item
We define \emph{strict type assignment} for $`l`m$-terms through the following natural deduction system:
{\def \TurnlmuS {\Turn}%
 \[ \begin{array}[t]{@{}rl@{~\quad}rl}
(\Ax) : &
\Inf	[`S \seqS `A ]
	{\derlmuS `G, x{:}`S |- x : `A | `D }
&
(\inter) : &
\Inf	[I = \emptyset \Or |I| \geq 2]
	{\derlmuS `G |- M : `A_i | `D
	\quad
	(\Forall i \ele I)
	}{ \derlmuS `G |- M : \AoI{I} | `D }
 \\ [5mm]
(\Abs) : &
\Inf	[x \notele `G]
	{\derlmuS `G,x{:}`S |- M : `C\arr `y | `D
	}{\derlmuS `G |- `lx.M : `S\prod `C\arr `y | `D }
&
(`m) : &
\Inf	[`a \notele `D, `C \seqS \cont{D}]
	{\derlmuS `G |- M : \cont{D}\arr `y | `a{:}`C,`D
	}{\derlmuS `G |- `m`a.[`a]M : `C\arr `y | `D }
 \\ [5mm]
(\App) : &
\Inf	{\derlmuS `G |- M : `S\prod `C\arr `y | `D
	\quad
	\derlmuS `G |- N : `S | `D
	}{ \derlmuS `G |- MN : `C\arr `y | `D }
&
(`m') : &
\Inf	[<16mm>{`b \not= `a \And `a \notele `D, \\ `C' \seqS \cont{D}}]
	{\derlmuS `G |- M : \cont{D}\arr `y | `a{:}`C,`b{:}`C',`D
	}{\derlmuS `G |- `m`a.[`b]M : `C\arr `y | `b{:}`C',`D }
 \end{array} \]}
\noindent
We write $ \derlmuS `G |- M : `S | `D $ for judgements derivable using these rules, and prefix this with `$\D \dcol {}$' if we want to name the derivation.

 \item
The relation $\seqS$ is naturally extended to variable contexts as follows:
 \[ \begin{array}{rcl}
`G \seqS `G' & \ByDef & \Forall x{:}`S\ele `G' ~ \Exists x{:}`T\ele `G \Pred [ `T \seqS `S ];
 \end{array} \]
$ `D \seqS `D'$ is defined similarly. 

 \end{enumerate}
 \end{definition}
 
 \begin{definition}
By abuse of notation, we allow the notation $`S \inter `T$, where $`S = \AoI{n}$ and $`T = \BoI{m}$, which stands for $`A_1 \inter \dots \inter `A_n \inter `B_1 \inter \dots \inter `B_m$.
Given two contexts $`G_1$ and $`G_2$, we 
define the context $`G_1\inter `G_2$ as follows:
 \[ \begin{array}{rcr@{~}l}
`G_1\inter `G_2 &\ByDef& \Set{x{:}`S_1\inter `S_2 \mid x{:}`S_1 \ele `G_1 \And x{:}`S_2 \ele `G_2 } & \Union \\
&& \Set {x{:}`S \mid x{:}`S \ele `G_1 \And x \notele `G_2 } & \Union
\Set {x{:}`S \mid x{:}`S \ele `G_2 \And x \notele `G_1 }
 \end{array} \]
and write $\int_{\n}{`G_i}$ for $ `G_1\inter \dots \inter `G_n $.
We will also allow intersection of continuation types as short-hand notation: let $\cont{D} = `S_1 \prod \dots \prod `S_n$ $\prod `W $, and $`C = `T_1 \prod \dots \prod `T_m \prod `W $ and assume, 
that $n < m$; we define %
 \[ \begin{array}{rcl}
\cont{D}\inter `C &\ByDef& `S_1\inter`T_1 \prod \dots \prod `S_n \inter `T_n \prod `T_{n+1} \prod \dots \prod `T_m \prod `W . 
 \end{array} \] 
(we need this notion in the proof of Thm.~\ref{characterisation of head normal form}). 
Then $`D_1 \inter `D_2$ is defined the same way as $`G_1\inter `G_2$.

 \end{definition}

In \cite{Bakel-FSCD'16} it is then shown that this notion of type assignment is closed under conversion, so can be used to define a (filter) semantics.
That paper also defines a notion of cut-elimination, by defining derivation reduction $\derred$, where only those redexes in terms are contracted that are typed with a type different from $`w$; it shows that this notion is strongly normalisable, which then leads to the proof that all terms typeable in a restriction of $\TurnlmuS$ that eliminates the type constant $`w$, are strongly normalisable.

The main results shown in \cite{Bakel-FSCD'16} that are relevant to this paper are:

 \begin {theorem} [\cite{Bakel-FSCD'16}] \label{FSCD lemma}
 \begin{enumerate} 
 
 \firstitem 
\label{seqS admissible} \label{closed for seqS}
If $ \derlmuS `G |- M : `S | `D $, $`G' \seqS `G$, $`D' \seqS `D$,\footnote{
The condition $`D' \seqS `D$ 
might seem counterintuitive, since one might expect the inclusion relation to be reversed. 
To support intuition, we can see types in name contexts as negations, and $`a{:}`A\prod `W $ as $`a{:}\neg`A$.
Notice that $ `A\inter`B\prod `W \seqS `A\prod `W $
; obviously we have $`a{:}`A\inter`B\prod `W \seqS `a{:}`A\prod `W $ and $\neg `A \seq \neg`A \union \neg`B $.
}
and $`S \seqS `T$, then $ \derlmuS `G' |- M : `T | `D' $.
 
 \item \label {closed for eq}
If $ \derlmuS `G |- M : `A | `D $ and $M \eqbmu N$, then $ \derlmuS `G |- N : `A | `D $.

 \item \label {relation bred derred}
Let $\D \dcol \derlmuS `G |- M : `S | `D $, and $\D \rtcderred \D'\dcol \derlmuS `G |- N : `S | `D $, then $M\rtcredbmu N$.

 \item \label {typeable implies SN}
If $\D \dcol \derlmuS `G |- M : `S | `D $, then $\SN (\D)$ ($\D$ is strongly normalisable).
 \end{enumerate}

 \end{theorem}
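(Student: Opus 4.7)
The plan is to prove the four parts in order, each leaning on the previous ones.

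For part 1, I would proceed by induction on the derivation $\derlmuS `G |- M : `S | `D$, with a case analysis on the last rule. The $(\Ax)$ case is immediate from transitivity of $\seqS$: from $`G' \seqS `G$ we get a finer declaration $x{:}`T'$ with $`T' \seqS `S \seqS `A$, and the side-condition of $(\Ax)$ accepts this directly. For $(\inter)$ with target type $`T = \BoI{\m}$, a small generation lemma for $\seqS$ forces each $`B_j$ to be $\seqS$-above some strict component of $`S$, so IH yields a derivation for each $`B_j$ and $(\inter)$ recombines them. For $(\Abs)$, the target $`T$ must again decompose into arrow components $`S_k \prod `C_k \arr `y$ with $`S_k \seqS `S$ and $`C_k \seqS `C$, allowing IH in the body together with rule $(\Abs)$. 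The rules $(`m)$, $(`m')$ use $\seqS$-transitivity together with the side-condition $`C \seqS \cont{D}$; the apparently reversed direction $`D' \seqS `D$ for name contexts is exactly what is needed so that $`a{:}`C \in `D$ gives rise to a sharper $`a{:}`C'' \in `D'$ with $`C'' \seqS `C$.

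For part 2, the work factors into two substitution lemmas and a generation lemma per rule. Term substitution is standard for intersection systems: a derivation of $M[N/x]$ is built by typing each occurrence of $x$ with the appropriate component of $`S$ and collecting them with $(\inter)$. Structural substitution $M[L\cdot`g/`a]$ is analogous, but acts on every named subterm $[`b]P$ with $`b = `a$, replacing the active conclusion $`C \arr `y$ with $(`S \prod `C)\arr `y$ where $`S$ comes from typing $L$; here one uses that the component typings of $L$ are gathered into an intersection before feeding $(\App)$. With these in hand, subject reduction for $(`b)$, $(`m)$ and $(\Rename)$ is routine. Subject expansion requires taking several derivations of the contractum (as indexed by $\inter$ structure at each binder) and repackaging them into a derivation of the redex; the constant $`w$ plays its customary role in ensuring we can always type the argument of a vacuous abstraction or a vanishing named subterm.

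For part 3, I would argue by induction on the length of the derivation reduction $\D \rtcderred \D'$, proving that a single $\derred$ step either mirrors a single $\redbmu$ step on the underlying term or is silent on terms. This hinges on the design of $\derred$, which only contracts typed redexes whose active type is different from $`w$; each such derivation redex sits over a genuine $(`b)$, $(`m)$ or $(\Rename)$ term redex. Combining this observation with compatibility gives $M \rtcredbmu N$.

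Part 4 will be the main obstacle, and the plan is to adapt the Tait-style reducibility argument of \cite{Bakel-NDJFL'04,Bakel-ACM'11} to the $\lmu$ setting with continuation types. I would define a family of sets $\Comp(`A)$ of derivations, together with an auxiliary family indexed by continuation types $`C$, asking of a derivation that it be strongly $\derred$-normalising and closed under head-expansion by $(`b)$-, $(`m)$- and $(\Rename)$-redexes; the continuation-indexed version must additionally be stable under composition with any stream derivation in $\Comp(`S_1) \times \cdots \times \Comp(`S_n)$. One then shows by induction on types that variables inhabit all their assigned $\Comp(`A)$, and by induction on derivations that every derivation lies in the $\Comp$-set corresponding to its type, using the structural-substitution lemma of part 2 to handle $(`m)$ and $(`m')$. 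The delicate point is that structural substitution modifies every named subterm, so the continuation-reducibility predicate must be designed so that extending an $`a$-named position with a reducible stream preserves membership; once this is set up correctly the conclusion $\Comp \subseteq \SN(\cdot)$ yields the theorem.
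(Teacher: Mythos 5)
The first thing to say is that this paper contains no proof of this statement for you to be measured against: Thm.\,\ref{FSCD lemma} is quoted, with its citation, from the earlier work \cite{Bakel-FSCD'16} and is used here purely as imported infrastructure. So the only possible comparison is with the strategy of that cited work, which your outline does track: part \ref{closed for seqS} by induction on derivations, part \ref{closed for eq} via term- and structural-substitution lemmas plus subject expansion (with $`w$ covering arguments that vanish), part \ref{relation bred derred} by observing that every $\derred$-step either projects onto a $\redbmu$-step or leaves the term unchanged, and part \ref{typeable implies SN} by the computability-predicate argument for strong normalisation of derivation reduction that the paper explicitly attributes to \cite{Bakel-NDJFL'04,Bakel-ACM'11}. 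That is the right shape of proof, and nothing in your plan is structurally wrong.

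One concrete slip is worth flagging. In your $(\Abs)$ case for part \ref{closed for seqS} you assume that $`S\prod `C\arr `y \seqS `T$ decomposes $`T$ into components $`S_k\prod `C_k\arr `y$ with $`S_k \seqS `S$ and $`C_k \seqS `C$. The relation $\seqS$ of Def.\,\ref{strict types} has no rule that propagates through $\arr$ (only through $\prod$ and $\inter$), so two basic types are comparable only when equal; the case is therefore degenerate, every non-$`w$ component of $`T$ being literally $`S\prod `C\arr `y$. Had the decomposition been as you describe, your variance would moreover point the wrong way: to rebuild $`S_k\prod `C_k\arr `y$ by $(\Abs)$ you need the body at type $`C_k\arr `y$, and the induction hypothesis on the body only weakens $`C\arr `y$ along $\seqS$, which $`C_k \seqS `C$ does not supply. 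Beyond that, the proposal remains an outline: the genuinely laborious points of the cited proof are the expansion cases of part \ref{closed for eq} for structural reduction and renaming, where named subterms change type or disappear, and the continuation-indexed computability predicate of part \ref{typeable implies SN}; your sketch names these correctly but leaves all of their content to the reader.
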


 \section{Approximation semantics for \texorpdfstring{$\lmu$}{}}
\label{approximation section}

Following the approach of \cite{Wadsworth'76}, we now define an \emph{approximation semantics} for $\lmu$ with respect to $\redbmu$.

Essentially, approximants are partially evaluated expressions in which the locations of incomplete evaluation (\emph{i.e.}~where reduction \emph{may} still take place) are explicitly marked by the element $\bottom$; thus, they \emph{approximate} the result of computations.

Approximation for $\Lmu$ (a variant of $\lmu$ where naming and $`m$-binding are separated \cite{deGroote'94}) has been studied by others as well \cite{Saurin'10,deLiguoro'16}; \emph{weak} approximants for $\lmu$ are studied in \cite{Bakel-Vigliotti-CLaC'14}.

 \begin{definition}[Approximation for $\lmu$] \label{approximation lmu} \label{approximant definition} \label{definition approximate normal forms}
 \begin{enumerate}

 \firstitem
We define $\lmu\bottom$ as an extension of $\lmu$ by adding the term constant $\bottom$.

 \item
The set of $\lmu$'s \emph{approximants} $\SetAppr$ with respect to $\redbmu$ is defined through the grammar:%
 \[ \begin{array}{rrccl@{\quad}l}
A & ::= &
	\bot 
&\mid &
	xA_1\dots A_n & (n \geq 0)
\\ &&&\mid &
	`lx.A & ( A \not= \bot)
\\ &&&\mid &
`m`a.[`b]A & (A \not= `m`g[`d]A', ~ A \not= \bot)
 \end{array} \]

 \item
The relation ${\dirapp} \subseteq 
\lmu\bottom^2 $ is 
the smallest preorder that is the compatible extension of $\bot \dirapp M$\Ignore{:\footnote{Notice that if $\ftAppr_1 \dirapp M_1$, and $\ftAppr_2 \dirapp M_2$, then $\ftAppr_1\ftAppr_2$ need not be an approximant; it is one if $\ftAppr_1 = x\ftAppr_1^1\dots \ftAppr_1^n$, perhaps prefixed with $`m`a.[`b]$.}
 \[ \begin {array}{rcl}
 \bottom &\dirapp& M \\
M \dirapp M' & \Then & `l x . M \dirapp `l x . M' \\
M \dirapp M' & \Then & `m`a.[`b]M \dirapp `m`a.[`b]M' \\
M_1 \dirapp M'_1 \And M_2 \dirapp M'_2 & \Then & M_1M_2 \dirapp M_1'M_2'
 \end {array} \]}{.}

 \item
The set of \emph{approximants} of $M$, $\SetAppr(M)$, is defined
as%
 \[ \begin {array}{rcl}
\SetAppr(M) &\ByDef& \Set{ A \ele \SetAppr \mid \Exists N \ele \lmu \Pred[ M \rtcredbmu N \And A \dirapp N] } .
 \end {array} \]

 \item
\emph{Approximation equivalence} between terms is defined through:
$ \begin{array}{@{}rcl} M \equivA N &\ByDef& \SetAppr(M) = \SetAppr(N). \end{array} $

 \end{enumerate}
 \end{definition}

The relationship between the approximation relation and reduction is characterised by: 


 \begin{lemma} \label{approximation lemma redbmu}
 \begin{enumerate} 

 \firstitem 
If $A \dirapp M$ and $M \rtcredbmu N$, then $A \dirapp N$.

 \item $\lmuHNF$ is a head-normal form if and only if there exists $ A \ele \SetAppr $ such that $ A \dirapp \lmuHNF $ and $A \not= \bot $.

 \end{enumerate}
 \end{lemma}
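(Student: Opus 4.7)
My plan for part~(1) is induction on the length of the reduction $M \rtcredbmu N$. The base case is immediate, and the inductive step reduces to the one-step case $M \redbmu N$, which I would handle by case analysis on the shape of $A$. The central structural observation is that no approximant contains a contractible redex at a non-$\bot$ position: applications in $\SetAppr$ have the form $x A_1 \cdots A_n$, so no $`b$- or $`m$-redex sits at the root of any application, and the side conditions $A' \not= \bot$ and $A' \not= `m `g.[`d] A''$ on the $`m$-clause exclude renaming redexes at the root of any $`m$-binding. If $A = \bot$ then $\bot \dirapp N$ by definition; otherwise $A \dirapp M$ forces $M$ to share the top structure of $A$ (by a routine induction on the derivation of $A \dirapp M$), so the redex contracted in $M \redbmu N$ must lie strictly inside a subterm of $M$ matching a proper subterm of $A$, and the induction hypothesis applied componentwise yields $A \dirapp N$.

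For part~(2), the forward direction I would prove by induction on the head-normal-form grammar. If $\lmuHNF = x M_1 \cdots M_n$, take $A = x \bot \cdots \bot$. If $\lmuHNF = `l x. \lmuHNF'$, apply induction to $\lmuHNF'$ to obtain a non-$\bot$ approximant $A' \dirapp \lmuHNF'$, and take $A = `l x . A'$. If $\lmuHNF = `m `a.[`b] \lmuHNF'$ with $\lmuHNF' \not= `m `g.[`d] \lmuHNF''$, the head-normal-form grammar forces $\lmuHNF'$ to have a variable or $`l$ at its head, so the inductive $A'$ is automatically non-$\bot$ and not of $`m$-form; hence $`m `a.[`b] A'$ satisfies both approximant side conditions. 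The converse direction is a case analysis on a non-$\bot$ $A$: each admissible shape propagates through $\dirapp$ to force $M$ into the corresponding shape of the head-normal-form grammar, and the side conditions on approximants translate slot-for-slot into those defining head-normal forms (in particular, if $A = `m`a.[`b]A'$ with $A' \not= `m`g.[`d]A''$, then $M = `m`a.[`b]M'$ with $M' \not= `m`g.[`d]M''$, since otherwise the structure of $A'$ would contradict its side condition).

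The main delicate point is the renaming case of part~(1). If $M$'s head has the shape $`m `a.[`b] `m `g.[`d] P$, then $M$ admits no non-$\bot$ approximant, since a non-$\bot$ approximant of the form $`m `a.[`b] A'$ cannot have $A'$ of $`m$-form by its side condition. This makes the otherwise awkward renaming reduction vacuous for non-$\bot$ approximants and lets the induction close uniformly without an \emph{ad hoc} case. In part~(2) the same alignment of side conditions is precisely what makes the backward direction go through without a special clause for the renaming exclusion; the rest is routine structural induction.
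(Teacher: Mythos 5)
Your proposal is correct and follows essentially the same route as the paper: a structural induction matching the shape of a (non-$\bot$) approximant against the term, driven by the observation that the side conditions on approximants exclude any redex at a structured position (in particular the renaming redex under a $`m$-binder), so reduction can only happen inside subterms covered by proper subterms of $A$; the paper phrases part~(1) as a single induction on the structure of $A$ handling $\rtcredbmu$ directly rather than first peeling off one reduction step, but this is an inessential reorganisation. The only nitpick is terminological: where you say $M$ with head $`m`a.[`b]`m`g.[`d]P$ ``admits no non-$\bot$ approximant'' you mean no non-$\bot$ $A \ele \SetAppr$ with $A \dirapp M$ (the set $\SetAppr(M)$ may well contain non-$\bot$ elements after reducing the renaming redex), which is clearly what your argument uses.
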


 \begin{proof}

 \begin{enumerate} 

 \firstitem 
By induction on the structure of approximants.
 \begin{description} 
 \item[$A = \bot$] Trivial, since $\bot \tseq N$.

 \item[$A = xA_1\dots A_n$]
If $ xA_1\dots A_n \tseq M$, then $M \same xM_1\dots M_n$, with $A_i \tseq M_i$ for all $i \ele \n$.
If $M \rtcredbmu N$, then $N = xN_1\dots N_n$ with $M_i \rtcredbmu N_i$, for all $i \ele \n$ (notice that the reduction can take place in many sub-terms, and need not take place in all).
Then, by induction, $A_i \tseq N_i$ for all $i \ele \n$, so $A \tseq N$.

 \item[$A = `lx.A'$, $A' \not= \bot$]
If $ `lx.A' \tseq M$, then $M \same `lx.M'$, with $A' \tseq M'$.
If $M \rtcredbmu N$, then $N = `lx.N'$ with $M' \rtcredbmu N'$.
Then, by induction, $A' \tseq N'$, so $A \tseq N$.

 \item[{$A = `m`a.[`b]A'$, $A' \not= `m`g[`d]A''$, $A' \not= \bot$}]
If $ `m`a.[`b]A' \tseq M$, then $M \same `m`a.[`b]M'$, with $A' \tseq M'$.
Since $A' \not= `m`g[`d]A''$, $M \not= `m`a.[`b]`m`g[`d]M''$, so any reduction in $M$ takes place inside $M'$.
So if $M \rtcredbmu N$, then $N = `m`a.[`b]N'$ with $M' \rtcredbmu N'$.
Then, by induction, $A' \tseq N'$, so $A \tseq N$.

 \end{description}
 
 \item
 \begin{description}  

 \item[\textit{only if}]
By induction on the structure of head-normal forms:
 \begin{description} 

 \item[$\lmuHNF = xM_1\dots M_n$]
Take $A = x\bot \dots \bot$.

 \item[$\lmuHNF = `lx. \lmuHNF'$]
By induction, there exists $A \not= \bot$ such that $A \tseq \lmuHNF'$.
Then $`lx.A \tseq `lx. \lmuHNF'$; notice that, since $A \not= \bot$, also $`lx.A \ele \SetAppr$.

 \item[{$\lmuHNF = \muterm`a.[`b] \lmuHNF', ~\lmuHNF' \not= \muterm`g.[`d] \lmuHNF'' $}]
By induction, there exists $A \not= \bot$ such that $A \tseq \lmuHNF'$.
Then $\muterm`a.[`b]A \tseq \muterm`a.[`b] \lmuHNF'$; notice that, since $A \not= \muterm`g.[`d]A'$ and $A \not= \bottom$, also $\muterm`a.[`b]A \ele \SetAppr$.

 \end{description}

 \item[\textit{if}]
If there exists $ A \ele \SetAppr $ such that $ A \dirapp M $ and $A \not= \bot $, then either:
 \begin{description} 

 \item[$A = xA_1\dots A_n$]
If $ xA_1\dots A_n \tseq M$, then $M \same xM_1\dots M_n$, so $M$ is in head-normal form. 

 \item[$A = `lx.A'$, $A' \not= \bot$]
If $ `lx.A' \tseq M$, then $M \same `lx.M'$, with $A' \tseq M'$.
Since $A' \not= \bot$, by induction $M'$ is in head-normal form, so also $`lx.M'$ is in head-normal form. 

 \item[{$A = `m`a.[`b]A'$, $A' \not= `m`g[`d]A''$, $A' \not= \bot$}]
If $ `m`a.[`b]A' \tseq M$, then $M \same `m`a.[`b]M'$, with $A' \tseq M'$.
Since $A' \not= \bot$, by induction $M'$ is in head-normal form; since $A' \not= `m`g[`d]A''$, also $M' \not= `m`g[`d]M''$, so also $`m`a.[`b]M'$ is in head-normal form. 
\QED

 \end{description}
 \end{description}
 \end{enumerate}
 \end{proof}

The following definition introduces an operation of join on $\lmu\bottom$-terms.

 \begin{definition} [Join, compatible terms]
 \label {join definition}

 \begin{enumerate}

 \firstitem
The partial mapping {\em join}, $\join : \lmu\bottom^2 \rightarrow \lmu\bottom$, is defined by:%
 \[ \begin{array}{r@{\,}c@{\,}lcl}
\bottom \join M \quad \same \quad M &\join& \bottom &\same& M \\
x &\join& x &\same& x \\
(`l x . M) &\join& (`l x . N) &\same& `l x . (M\join N) \\
(`m`a.[`b]M) &\join& (`m`a.[`b]N) &\same& `m`a.[`b](M\join N) \\
(M_1M_2) &\join& (N_1N_2) &\same& (M_1\join N_1)\,(M_2\join N_2) \footnotemark
 \end{array} \]

 \item
If $M \join N$ is defined, then $M$ and $N$ are called {\em compatible}.
 \end{enumerate}

 \end{definition}
\footnotetext{The last alternative in the definition of $\join$ defines the join on applications in a more general way than Scott's, that would state that
 $ \begin{array}{r@{\quad}c@{\quad}l}
(M_1M_2) \join (N_1N_2) &\tseq& (M_1\join N_1) \,(M_2\join N_2),
 \end{array} $ since it is not always sure if a join of two arbitrary terms exists.
Since we will use our more general definition only on terms that are compatible, there is no real conflict.}
It is easy to show that $\join$ is associative and commutative; we will use $\MojM{n}$ for the term $M_1 \join \dots \join M_n$.
Note that $\bottom$ can be defined as the empty join, i.e.\ if $M \same \MojM{0}$, then $M \same \bottom$.

The following lemma shows that the join acts as least upper bound of compatible terms.

 \begin{lemma} \label {tseq lemma}
 \begin{enumerate}
 
 \firstitem \label{join lemma}
If $P \tseq M$, and $Q \tseq M$, then $P \join Q$ is defined, and:%
 \[ \begin{array}{cccc}
P \tseq P \join Q, & Q \tseq P \join Q, & \It{and} & P \join Q \tseq M.
 \end{array} \]

 \item 
If $A_1$, $A_2 \ele \SetAppr (M)$, then $A_1$ and $A_2$ are compatible.

 \end{enumerate}
 \end{lemma}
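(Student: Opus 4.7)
The plan is to prove part (1) by induction on the combined structure of $P$ and $Q$ (equivalently, on $M$), using the characterisation of $\tseq$ as the compatible closure of $\bot \tseq M$. Part (2) then follows from part (1) together with confluence of $\redbmu$ and Lemma~\ref{approximation lemma redbmu}(1).

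For part (1), the base cases are when $P = \bot$ or $Q = \bot$. In either case the join is defined and equals the non-$\bot$ term, and all three inequalities are immediate. Otherwise, since $P \tseq M$ with $P \neq \bot$, the compatible-closure definition forces $P$ and $M$ to share the same outermost constructor (variable, abstraction, $\mu$-abstraction, or application), with the immediate subterms of $P$ being $\tseq$-below the corresponding subterms of $M$; similarly for $Q$. Thus $P$ and $Q$ have the same head constructor as $M$, so the matching clause of the definition of $\join$ applies. I then appeal to the induction hypothesis on each pair of corresponding subterms $P_i, Q_i$ (with $M_i$ as the common upper bound) to obtain $P_i \join Q_i$ defined with $P_i \tseq P_i \join Q_i \tseq M_i$ and similarly for $Q_i$. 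Reassembling with the common head constructor gives $P \join Q$ defined and witnessing $P, Q \tseq P \join Q \tseq M$; for the application case I treat $M_1 M_2$ as a binary node exactly as in the $\join$ clause.

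For part (2), suppose $A_1, A_2 \in \SetAppr(M)$. By definition there exist $N_1, N_2$ with $M \rtcredbmu N_i$ and $A_i \dirapp N_i$ for $i = 1,2$. Confluence of $\redbmu$ (Py~\cite{Py-PhD'98}) gives a common reduct $N$ with $N_1 \rtcredbmu N$ and $N_2 \rtcredbmu N$. By Lemma~\ref{approximation lemma redbmu}(1), $A_i \dirapp N$ for both $i$. Part~(1) applied to $M := N$, $P := A_1$, $Q := A_2$ then yields that $A_1 \join A_2$ is defined, i.e.\ $A_1$ and $A_2$ are compatible.

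The only subtle point is the case analysis for part (1): one must verify that the clauses of $\join$ cover precisely the non-$\bot$ shapes that $P$ and $Q$ can take when they both $\tseq$-approximate the same $M$, and in particular that the application case behaves correctly under the binary reading adopted in Definition~\ref{join definition}. This is essentially a bookkeeping matter, and no confluence or typing machinery is needed for part (1); confluence is only invoked once, at the very start of the argument for part (2).
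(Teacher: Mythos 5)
Your proof is correct and takes essentially the same route as the paper: part~(1) by induction on the definition of $\tseq$ (base case $\bot$, then a case split on the shared head constructor with the induction hypothesis applied to corresponding subterms), and part~(2) by taking the witnessing reducts $N_1,N_2$, using confluence to obtain a common reduct, applying Lemma~\ref{approximation lemma redbmu} to lift both $A_i$ to it, and concluding with part~(1).
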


 \begin{proof}
 \begin{enumerate}
 
 \firstitem
By easy induction on the definition of $\tseq$.

\Comment{
 \begin{enumerate}

 \item \label{bottom case}
If $P \same \bottom$, then $P \join Q \same Q$, so $P \tseq P \join Q, Q \tseq P \join Q$, and $P \join Q \tseq Q \tseq M$.
(The case $Q \same \bottom$ goes similarly.)

 \item
If $P \same x$, then $M \same x$, and either $Q = \bottom$ or also $Q \same x$.
The first case has been dealt with in part~\ref{bottom case}, and for the other: $P \join Q \same x$.
Obviously, $x \tseq x \join x$, $x \tseq x \join x$, and $x \join x \tseq x$.

 \item
If $P \same `l x . N_1$, then $M \same `l x . N$, $N_1\tseq N$, and either $Q = \bottom$ or $Q \same `l x . N_2$.
The first case has been dealt with in part~\ref{bottom case}, and for the other: then $N_2\tseq N$.
Then, by induction, $N_1 \tseq N_1\join N_2$, $N_2 \tseq N_1\join N_2$, and $N_1\join N_2 \tseq N$.
Then also $`l x . N_1 \tseq `l x . N_1\join N_2$, $`l x . N_2 \tseq `l x . N_1\join N_2$, and $`l x . N_1\join N_2 \tseq `l x . N$.
Notice that $`l x . N_1\join N_2 \same (`l x . N_1) \join (`l x . N_2)$.

 \item
If $P \same P_1Q_1$, then $M \same PQ$, $P_1 \tseq P$, $Q_1 \tseq Q$, and either $Q = \bottom$ or $Q \same P_2Q_2$.
The first case has been dealt with in part~\ref{bottom case}, and for the other: then $P_2 \tseq P$, $Q_2 \tseq Q$.
By induction, we know $P_1 \tseq P_1 \join P_2$, $P_2 \tseq P_1 \join P_2$, and $P_1 \join P_2 \tseq P$, as well as $Q_1 \tseq Q_1 \join Q_2$, $Q_2 \tseq Q_1 \join Q_2$, and $Q_1 \join Q_2 \tseq Q$.
Then also $P_1Q_1 \tseq (P_1\join P_2)(Q_1\join Q_2)$, $P_2Q_2 \tseq (P_1\join P_2)(Q_1\join Q_2)$, and $(P_1\join P_2)(Q_1\join Q_2) \tseq PQ$.
Notice that $(P_1\join P_2)(Q_1\join Q_2) \same (P_1Q_1)\join (P_2Q_2)$.%

 \end{enumerate}
}

 \item
If $A_1$, $A_2 \ele \SetAppr (M)$, then there exist $N_1$, $N_2$ such that $M \rtcredbmu N_i $ and $A_i \tseq N_i$, for $i = 1,2$.
Since $\redbmu$ is confluent, there exists $P$ such that $N_i \rtcredbmu P$; then by Lem.\,\ref{approximation lemma redbmu}, also $A_i \tseq P$, for $i = 1,2$. 
Then, by part~\ref{join lemma}, $A_1$ and $A_2$ are compatible.\qed

 \end{enumerate}
 \end{proof}
We can also define $\Sem{M} = \sqcup \, \Set{ A \mid A \ele \SetAppr(M) } $ (which by the previous lemma is well defined); then $\Sem{`.}$ corresponds to (a $\lmu$ variant of) B\"ohm trees \cite{Bohm'68,Barendregt'84}.

As is standard in other settings, interpreting a $\lmu$-term $M$ through its set of approximants $\SetAppr(M)$ gives a semantics.

 \begin{theorem} [Approximation semantics for $\lmu$] \label{approx seman lmux}
If $M \eqbmu N$, then $M \equivA N$.
 \end{theorem}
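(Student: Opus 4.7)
The plan is to reduce the claim about $\eqbmu$ to two facts that are already available: confluence of $\redbmu$ (Py's result, cited just after Definition~\ref{def:reduction}) and Lemma~\ref{approximation lemma redbmu}(1), which says that $A \dirapp M$ and $M \rtcredbmu N$ imply $A \dirapp N$. The first step is to observe that, since $\eqbmu$ is the equivalence closure of $\redbmu$ and $\redbmu$ is confluent (Church--Rosser), $M \eqbmu N$ is equivalent to the existence of a common reduct: some $P$ with $M \rtcredbmu P$ and $N \rtcredbmu P$. So it suffices to prove that reduction preserves the set of approximants, i.e.\ if $M \rtcredbmu P$ then $\SetAppr(M) = \SetAppr(P)$.

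For the inclusion $\SetAppr(P) \subseteq \SetAppr(M)$ this is immediate from the definition of $\SetAppr({\cdot})$: any $A \in \SetAppr(P)$ witnesses $A \dirapp P'$ for some $P \rtcredbmu P'$, and then $M \rtcredbmu P \rtcredbmu P'$ witnesses $A \in \SetAppr(M)$. For the opposite inclusion $\SetAppr(M) \subseteq \SetAppr(P)$ I would again invoke confluence: if $A \in \SetAppr(M)$, then $A \dirapp M'$ with $M \rtcredbmu M'$; applying confluence to the pair $M \rtcredbmu M'$ and $M \rtcredbmu P$ yields a common reduct $Q$ with $M' \rtcredbmu Q$ and $P \rtcredbmu Q$. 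Lemma~\ref{approximation lemma redbmu}(1) lifts $A \dirapp M'$ along $M' \rtcredbmu Q$ to $A \dirapp Q$, and $P \rtcredbmu Q$ then certifies $A \in \SetAppr(P)$.

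Putting these two inclusions together gives $\SetAppr(M) = \SetAppr(P) = \SetAppr(N)$, hence $M \equivA N$.

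I do not expect a real obstacle: the definition of $\SetAppr$ as a set of approximants of reducts is tailored exactly so that a confluence argument of this shape goes through. The only subtlety worth noting is that the non-trivial inclusion truly requires confluence, rather than just monotonicity of $\dirapp$ along reduction; without it, an approximant of $M$ might sit over a reduct that is incompatible with the path $M \rtcredbmu N$. Everything else is bookkeeping on the equivalence chain, which collapses to a single peak by Church--Rosser.
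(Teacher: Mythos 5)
Your proposal is correct and follows essentially the same route as the paper: the paper proves the statement by induction on the definition of $\eqbmu$, reducing to the single case $M \rtcredbmu N$, where the easy inclusion $\SetAppr(N) \subseteq \SetAppr(M)$ is immediate and the other inclusion uses Church--Rosser together with Lemma~\ref{approximation lemma redbmu}, exactly as you do. Your only (harmless) variation is to collapse the whole equivalence chain to a common reduct up front rather than handling it by induction on $\eqbmu$.
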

\vspace*{-2mm}
 \begin{Proof}
By induction on the definition of $\eqbmu$, of which we only show the case $M \rtcredbmu N$.

 \begin {description} 

 \item[$\SetAppr(M) \subseteq \SetAppr(N)$]
If $A \ele \SetAppr(M)$, then there exists $L$ such that $M \rtcredbmu L$ and $A \tseq L$.
Since $\redbmu$ is Church-Rosser, there exists $R$ such that $L \rtcredbmu R$ and $N \rtcredbmu R$, so also $ M \rtcredbmu R$.
Then by Lem.\,\ref{approximation lemma redbmu}, $A \tseq R$, and since $N \rtcredbmu R$, we have $A \ele \SetAppr(N)$.

 \item[$\SetAppr(N) \subseteq \SetAppr(M)$]
If $ A \ele \SetAppr(N)$, then there exists $L$ such that $N \rtcredbmu L$ and $A \tseq L$.
But then also $M \rtcredbmu L$, so $ A \ele \SetAppr(M)$. \QED

 \end {description}
 \end{Proof}
The reverse implication of this result does not hold, since terms without head-normal form (which have only $\bot$ as approximant) are not all related by reduction, so approximation semantics is not fully abstract.

 \section{The approximation and head normalisation results for \texorpdfstring{$\TurnlmuS$}{}}
In this section we will show an approximation result, i.e.\ for every $M$, $`G$, $`S$, and $`D$ such that $\derlmuS `G |- M : `S | `D $, there exists an $A \ele \SetAppr(M)$ such that $\derlmuS `G |- A : `S | `D $.
From this
, the well-known characterisation of (head-)normalisation of $\lmu$-terms using intersection types follows easily, i.e.\ all terms having a (head) normal form are typeable in $\TurnlmuS$ (with a type without $`w$-occurrences).
Another result is the well-known characterisation of strong normalisation of typeable $\lmu$-terms, i.e.\ all terms, typeable in $\TurnlmuS$ without using the rule $(\inter)$ with $I = \emptyset$, are strongly normalisable.

First we give some auxiliary definitions and results.

The rules of the system $\TurnlmuS$ are generalised to $\lmu\bottom$; therefore, if $\bottom$ occurs in a term $M$ and $\D \dcol \derlmuS `G |- M : `S | `D $, in that derivation $\bottom$ has to appear in a position where the rule $(\inter)$ is used with $I = \emptyset$, i.e., in a sub-term typed with $`w$.
Notice that $`l x . \bottom$, $\bottom M_1\dots M_n$, and $`m`a.[`b]\bottom$ are typeable by $`w$ only.

First we show that $\TurnlmuS$ is closed for $\tseq$.
 
 \begin{lemma} \label {closed for tseq}
$\derlmuS `G |- M : `S | `D $ and $ M \tseq N $ then $ \derlmuS `G |- N : `S | `D $.
 \end{lemma}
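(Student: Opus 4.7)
The plan is to prove the lemma by straightforward induction on the derivation $\D \dcol \derlmuS `G |- M : `S | `D$. The key observation, foreshadowed in the remark preceding the lemma, is that in any such derivation the constant $\bot$ can only appear at positions typed with $`w$, since no rule other than $(\inter)$ with $I = \emptyset$ accepts $\bot$ as its subject. Growing a $\bot$ into an arbitrary subterm of $N$ therefore preserves typeability: the $`w$-position simply absorbs whatever material $N$ introduces by re-invoking the same premiseless rule.

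Concretely, the decisive case is $(\inter)$ with $I = \emptyset$: the conclusion is $\derlmuS `G |- M : `w | `D$, and since the rule has no premises we immediately obtain $\derlmuS `G |- N : `w | `D$ for any $N$ by the same instance of the rule. In the $(\Ax)$ case $M = x$, and $x \tseq N$ forces $N = x$, so the derivation stands unchanged. In the $(\inter)$ case with $|I| \geq 2$, the induction hypothesis applied to each premise yields $\derlmuS `G |- N : `A_i | `D$ for every $i \ele I$, which recombine via $(\inter)$. For the remaining rules $(\Abs)$, $(\App)$, $(`m)$ and $(`m')$, the subject $M$ has a specific non-$\bot$ top-level constructor, so the compatibility structure of $\tseq$ forces $N$ to share that constructor with subterms $\tseq$-related to those of $M$; I then apply the induction hypothesis to each subterm derivation and rebuild with the same rule, reusing all the side conditions on name contexts and the type-inclusion judgements $`C \seqS \cont{D}$.

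There is no substantive obstacle here: the lemma is essentially a bookkeeping exercise, supported by the fact that the only flexibility in a typing derivation lies precisely at the $`w$-typed positions where $\bot$ may occur, and these positions remain $`w$-typeable after $\bot$ is replaced by any subterm of $N$. The one point deserving a brief verbal remark is that the compatibility direction of $\tseq$ rules out, for example, a $\lambda$-term being related to a non-$\lambda$-term, so the case analysis is exhaustive without requiring any cross-case reasoning.
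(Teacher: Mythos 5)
Your proposal is correct and is essentially the paper's own argument: the paper proves this by an easy induction (on the definition of $\tseq$ rather than on the derivation, a cosmetic difference), with the whole content residing in the base case $\bot \tseq N$, where the type must be $`w$ and rule $(\inter)$ with $I = \emptyset$ retypes any $N$ with $`w$; all other cases are the compatibility bookkeeping you describe.
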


 \begin{proof}
By easy induction on the definition of $\tseq$; the base case, $\bottom \tseq N$, follows from the fact that then $`S = `w$.
\qed

 \end{proof}

Next we define a notion of type assignment that is similar to that of Def.\,\ref{strict type assignment for lmu}, but differs in that it assigns $`w$ \emph{only to the term} $\bottom$.

 \begin{definition} \label {bottom type assignment definition}

$\bottom${\em -type assignment} and $\bottom${\em -derivations} are defined 
as $\TurnlmuS$
, with the exception of: 
{\def \TurnLmuBot{\Turn}%
 \[ \begin{array}[t]{rl@{\dquad}rl}
\Ignore{
(\Ax) : &
\Inf [n \beq 1, i \ele \n]
	{ \derlmuBot `G,x{:}\AoI{\n} |- x : `A_i | `D }
&}{}
(\interBot) : &
\Inf	[n = 0 \Or n \geq 2]
	{\derlmuBot `G |- M_i : `A_i | `D
	\quad
	(\Forall i \ele \n)
	}{ \derlmuBot `G |- \MojM{n} : \AoI{\n} | `D }
\Ignore{
 \\ [5mm]
(\Abs) : &
\Inf	[x \notele `G]
	{ \derlmuBot `G, x{:}`S |- M : `C\arr `y | `D
	}{ \derlmuBot `G |- `l x . M : `S\prod`C\arr `y | `D }
&
(`m) : &
\Inf	[`a \notele `D, `C \seqS \cont{D}]
	{\derlmuBot `G |- M : \cont{D}\arr `y | `a{:}`C,`D
	}{\derlmuBot `G |- `m`a.[`a]M : `C\arr `y | `D }
 \\ [5mm]
(\App) : &
\Inf	{ \derlmuBot `G |- M : `S\prod`C\arr `y | `D
	 \quad
	 \derlmuBot `G |- N : `S | `D
	}{ \derlmuBot `G |- MN : `C\arr `y | `D }
&
(`m') : &
\Inf	[<20mm>{`b \not= `a \notele `D, \\ `C' \seqS \cont{D}}]
	{\derlmuBot `G |- M : \cont{D}\arr `y | `a{:}`C,`b{:}`C',`D
	}{\derlmuBot `G |- `m`a.[`b]M : `C\arr `y | `b{:}`C',`D }
}{}
 \end{array} \]}
We write $\derlmuBot `G |- M : `S | `D $ if this statement is derivable using a $\bottom$-derivation.

 \end{definition}
Notice that, by rule $(\interBot)$, $\derlmuBot `G |- {\bottom} : `w | `D $, and that this is the only way to assign $`w$ to a term.
Moreover, in that rule, the terms $M_j$ need to be compatible (otherwise their join would not be defined).

 \begin{lemma} \label {comparison}
 \begin{enumerate} 

 \firstitem \label {bottom to full}
If $\D \dcol \derlmuBot `G |- M : `S | `D $, then $\D \dcol \derlmuS `G |- M : `S | `D $.

 \item \label {full to bottom}
If $\D \dcol \derlmuS `G |- M : `S | `D $, then there exists $M' \tseq M$ such that $\D \dcol \derlmuBot `G |- M' : `S | `D $.

 \end{enumerate}
 \end{lemma}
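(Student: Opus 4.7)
The claim relates the $\bottom$-system to the strict system in two directions, and both directions I would prove by induction on the given derivation, with the interesting case being the intersection rule.

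\textbf{Part 1 (soundness of $\vdash_{\bottom}$ into $\vdash_{\ssubscr}$).} I would induct on the $\bottom$-derivation $\D$. Every inference rule other than $(\interBot)$ is literally a rule of $\TurnlmuS$, so those cases are immediate by invoking the inductive hypothesis on the premises and reapplying the same rule. The only real work is the $(\interBot)$ case. When $n=0$, the conclusion $\derlmuBot `G |- \bottom : `w | `D$ is matched by an instance of $(\inter)$ in $\TurnlmuS$ with $I = \emptyset$, giving $\derlmuS `G |- \bottom : `w | `D$. When $n\geq 2$, the premises are $\derlmuBot `G |- M_i : `A_i | `D$ and the conclusion is about $\MojM{n}$. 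By the inductive hypothesis we have $\derlmuS `G |- M_i : `A_i | `D$ for each $i \ele \n$. Since each $M_i \tseq \MojM{n}$ (an immediate consequence of the definition of $\join$, or of Lemma~\ref{tseq lemma}(1)), the closure result Lemma~\ref{closed for tseq} lifts each derivation to $\derlmuS `G |- \MojM{n} : `A_i | `D$. Applying the standard $(\inter)$ rule then yields $\derlmuS `G |- \MojM{n} : \AoI{\n} | `D$.

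\textbf{Part 2 (extraction of an approximant).} Here I would induct on the $\TurnlmuS$-derivation $\D$, constructing $M'$ alongside. For $(\Ax)$ take $M' = x$; for $(\Abs)$, $(\App)$, $(`m)$, $(`m')$ apply the inductive hypothesis to the premises and build $M'$ from the resulting approximants in the obvious way (e.g.\ for $(\App)$, if $M'_1 \tseq M$ and $N' \tseq N$ come from the hypotheses, then $M'_1 N' \tseq MN$ and the corresponding rule is available in $\vdash_{\bottom}$). Note that $\tseq$ is compatible with all the term-forming operations, so these combinations preserve $\tseq M$. The $(\inter)$ case splits on $I$: if $I = \emptyset$ then $`S = `w$, and we may simply take $M' = \bottom$, which satisfies $\bottom \tseq M$ and $\derlmuBot `G |- \bottom : `w | `D$ via the $n=0$ instance of $(\interBot)$. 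If $|I| \geq 2$, the inductive hypothesis applied to each premise produces, for each $i \ele I$, a term $M'_i \tseq M$ with $\derlmuBot `G |- M'_i : `A_i | `D$.

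The only genuine subtlety, and what I expect to be the main obstacle, is ensuring that these $M'_i$ can actually be joined. Because each $M'_i \tseq M$, pairwise compatibility is guaranteed by Lemma~\ref{tseq lemma}(1), and an easy iteration shows their $n$-ary join $M' = \MojM{n}$ is defined and again satisfies $M' \tseq M$. The rule $(\interBot)$ then applies directly, giving $\derlmuBot `G |- M' : \AoI{I} | `D$, which is what we need. Thus the induction goes through, with Lemmas~\ref{tseq lemma} and~\ref{closed for tseq} doing exactly the bookkeeping required to mediate between the two systems.
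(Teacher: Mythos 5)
Your proposal is correct and follows essentially the same route as the paper: both directions go by induction on the derivation with the intersection rule as the only non-trivial case, using the closure of the strict system under $\tseq$ (Lem.~\ref{closed for tseq}) for part 1 and the compatibility/join properties (Lem.~\ref{tseq lemma}) for part 2. Your write-up is in fact slightly more explicit than the paper's (spelling out the $I=\emptyset$ subcase and the compositional construction of $M'$ in the remaining rules), but there is no substantive difference in approach.
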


 \begin{proof}
 \begin{enumerate} 

 \firstitem
By induction on the structure of derivations in $\TurnLmuBot$. \Ignore{}{We only show:}
 \begin{description} 

\Ignore{
 \item [$\Ax$]
Immediate.
}{}

 \item [$\interBot$]
Then 
$`S = \AoI{\n}$, $M = \MojM{n}$, and, for every $\iotn$, $\derlmuBot `G |- M_i : `A_i | `D $.
Then, by induction, for every $\iotn$, $\derlmuS `G |- M_i : `A_i | `D $.
Since, by Lem.\,\ref {tseq lemma}, $M_i \tseq M$ for all $\iotn$, by Lem.\,\ref {closed for tseq}, for every $\iotn$, $\derlmuS `G |- M : `A_i | `D $, so by $(\inter)$, $\derlmuS `G |- M : \AoI{\n} | `D $.

\Ignore{
 \item [$\Abs$]
Then $M \same `l x . N$, and $`S = `T\prod `C\arr `y$, and $\derlmuBot `G, x{:}`T |- N : `C\arr `y | `D $.
Then, by induction, $\derlmuS `G,x{:}`T |- N : `C\arr `y | `D $, so by $(\Abs)$, $\derlmuS `G |- `l x . N : `T\prod`C\arr `y | `D $.

 \item [$\App$]
Then $M \same PQ$, $`S = `C\arr`y$, and there exists $`T$ such that $\derlmuBot `G |- P : `T\prod`C\arr `y | `D $, and $\derlmuBot `G |- Q : `T | `D $.
Then, by induction, $\derlmuS `G |- P : `T\prod`C\arr `y | `D $, and $\derlmuS `G |- Q : `T | `D $, so by $(\App)$, $\derlmuS `G |- PQ : `S | `D $.

 \item[$`m$]
Then $M \same `m`a.[`a]N$, $`S = `C\arr `y$, and $\derlmuBot `G |- N : \cont{D}\arr `y | `a{:}`C,`D $ with $`C \seqS \cont{D}$.
By induction, $\derlmuS `G |- N : \cont{D}\arr `y | `a{:}`C,`D $, so by rule $(`m)$, also $\derlmuS `G |- `m`a.[`a]N : `C\arr `y | `D $.

 \item[$`m'$]
Then $M \same `m`a.[`b]N$, $`S = `C\arr `y$, $`D = `b{:}`C',`D'$, and $\derlmuBot `G |- N : \cont{D}\arr `y | `a{:}`C,`b{:}`C',`D' $ with $`C' \seqS \cont{D}$.
By induction, $\derlmuS `G |- N : \cont{D}\arr `y | `a{:}`C,`b{:}`C',`D' $, so by rule $(`m')$, also $\derlmuS `G |- `m`a.[`b]N : `C\arr `y | `b{:}`C',`D' $.
}{}

 \end{description}

\Ignore{}{
All other cases follow by straightforward induction.}

 \item
By induction on the structure of derivations in $\TurnlmuS$. \Ignore{}{We only show:}
 \begin{description} 

\Ignore{
 \item [$\Ax$]
Immediate.
}{}

 \item [$\inter$]
Then $`S = \AoI{\n}$ and, for every $\iotn$, $\derlmuS `G |- M : `A_i | `D $; by induction, for every $\iotn$ there exists $M_i \tseq M$ such that $\derlmuBot `G |- M_i : `A_i | `D $ (notice that then these $M_i$ are compatible).
Then, by rule $(\interBot)$, we have $\derlmuBot `G |- \MojM{n} : `A_i | `D $.
Notice that, by Lem.\,\ref {tseq lemma}, $\MojM{n} \tseq M$.

\Ignore{
 \item [$\Abs$]
Then $M \same `l x.P$, and $`S = `T\prod `C\arr `y$, and $\derlmuS `G,x{:}`T |- P : `C\arr `y | `D $.
So, by induction, there exists $P' \tseq P$ such that $\derlmuBot `G, x{:}`T |- P' : `C\arr `y | `D $.
Then we obtain $\derlmuBot `G |- `l x.P' : `T\prod`C\arr `y | `D $ by rule $(\Abs)$. 
Notice that $`l x.P' \tseq `l x.P$.

 \item [$\App$]
Then $M \same PQ$, $`S = `C \arr `y$, and there is a $`T$ such that $\derlmuS `G |- P : `T\prod`C\arr `y | `D $, and $\derlmuS `G |- Q : `T | `D $.
Then, by induction, there are $P' \tseq P$, and $Q' \tseq Q$, such that $\derlmuBot `G |- P' : `T\prod`C\arr `y | `D $, and $\derlmuBot `G |- Q' : `T | `D $.
Then, by $(\App)$, $\derlmuBot `G |- P'Q' : `S | `D $.
Notice that $P'Q' \tseq PQ$.

 \item[$`m$]
Then $M \same `m`a.[`a]N$, $`S = `C\arr `y$, and $\derlmuS `G |- N : \cont{D}\arr `y | `a{:}`C,`D $ with $`C \seqS \cont{D}$.
By induction, there exists $N' \tseq N$, and $\derlmuBot `G |- N' : \cont{D}\arr `y | `a{:}`C,`D $, so by rule $(`m)$, also $\derlmuBot `G |- `m`a.[`a]N' : `C\arr `y | `D $.
Notice that $ `m`a.[`a]N' \tseq `m`a.[`a]N $.

 \item[$`m'$]
Then $M \same `m`a.[`b]N$, $`S = `C\arr `y$, $`D = `b{:}`C',`D'$, and $\derlmuS `G |- N : \cont{D}\arr `y | `a{:}`C,`b{:}`C',`D' $ with $`C' \seqS \cont{D}$.
By induction, there exists $N' \tseq N$, and $\derlmuBot `G |- N' : \cont{D}\arr `y | `a{:}`C,`b{:}`C',`D' $, so by rule $(`m')$, also $\derlmuBot `G |- `m`a.[`b]N' : `C\arr `y | `b{:}`C',`D' $.
Notice that $ `m`a.[`b]N' \tseq `m`a.[`b]N $.
\qed
}{}

 \end{description}
\Ignore{}{All other cases follow by straightforward induction.\qed}

 \end{enumerate}
 \end{proof}
Notice that, since $M'$ need not be the same as $M$, the second derivation in part~\ref {full to bottom} is not exactly the same; however, it has the same structure in terms of applied derivation rules.

Using Thm.\,\ref {FSCD lemma}\ref{typeable implies SN} and Lem.\,\ref{comparison}, as for the BCD-system (see \cite{Ronchi-Venneri'84}) and the system of \cite{Bakel-TCS'95}, the relation between types assignable to a $\lmu$-term and those assignable to its approximants can be formulated as:

 \begin{theorem} [Approximation] \label{approximation result for TurnlmuS}
$\derlmuS `G |- M : `S | `D \Iff \Exists A \ele \SetAppr(M) \Pred[ \derlmuS `G |- A : `S | `D ] $.
 \end{theorem}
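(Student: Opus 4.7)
The proof splits into the two directions.

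For ($\Leftarrow$), I assume $A \ele \SetAppr(M)$ and $\derlmuS `G |- A : `S | `D $. By Def.\,\ref{approximation lmu} there exists $N$ with $M \rtcredbmu N$ and $A \dirapp N$; Lem.\,\ref{closed for tseq} propagates the typing to $N$, and closure under conversion (Thm.\,\ref{FSCD lemma}\ref{closed for eq}) then yields $\derlmuS `G |- M : `S | `D $.

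For ($\Rightarrow$), I follow the derivation-reduction strategy familiar from \cite{Bakel-TCS'95,Bakel-ACM'11}. Start from $\D \dcol \derlmuS `G |- M : `S | `D $; by Lem.\,\ref{comparison}\ref{full to bottom} there is a $\bot$-derivation $\D_1 \dcol \derlmuBot `G |- M_1 : `S | `D $ with $M_1 \dirapp M$ of the same shape as $\D$.  Since via Lem.\,\ref{comparison}\ref{bottom to full} every $\bot$-derivation is also a $\TurnlmuS$-derivation, Thm.\,\ref{FSCD lemma}\ref{typeable implies SN} applies and $\D_1$ is strongly normalising under $\derred$.  Pick a $\derred$-normal form $\D_2 \dcol \derlmuBot `G |- A : `S | `D $; Thm.\,\ref{FSCD lemma}\ref{relation bred derred} then gives $M_1 \rtcredbmu A$, and Lem.\,\ref{comparison}\ref{bottom to full} re-casts the typing as $\derlmuS `G |- A : `S | `D $.

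The crux is to show $A \ele \SetAppr$. In the $\bot$-system the type $`w$ is assigned only to the constant $\bot$, so every sub-term of $A$ distinct from $\bot$ carries a type $\neq `w$.  If $A$ contained, outside a $\bot$, a $`b$-redex $(`lx.P)Q$, a structural redex $(`m`a.[`b]P)Q$, or a renaming redex $`m`a.[`b]`m`g.[`d]P$, that redex would therefore be typed non-$`w$ and would be contractible in $\D_2$, contradicting its normality.  The shapes that remain match the grammar of Def.\,\ref{approximation lmu}, namely $\bot$, $xA_1\dots A_n$, $`lx.A$ with $A\neq\bot$, and $`m`a.[`b]A$ with $A\neq\bot$ and $A\neq `m`g.[`d]A'$; a simple structural induction then confirms $A \ele \SetAppr$.

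It remains to find $N$ with $M \rtcredbmu N$ and $A \dirapp N$, so as to place $A$ in $\SetAppr(M)$ via Def.\,\ref{approximation lmu}.  This is a routine simulation argument: since $M_1 \dirapp M$ amounts to $M_1$ being $M$ with certain sub-terms replaced by $\bot$, each $\redbmu$-step along $M_1 \rtcredbmu A$ is either mirrored by a step in $M$ at the corresponding redex, or is internal to the material that a $\bot$ hides and contributes no step on the $M$ side; preservation of $\dirapp$ under both term and structural substitution then gives $A \dirapp N$.  I expect the classification of $\derred$-normal forms in the previous paragraph to be the main obstacle, because the renaming redex must be handled alongside the two computational redexes, and one must verify that no redex hides underneath a $`l$-, $`m$-, or application head once the outer derivation is in normal form.
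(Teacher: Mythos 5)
Your ($\Leftarrow$) direction coincides with the paper's, and your overall plan for ($\Rightarrow$) --- normalise the derivation under $\derred$ and use the $\bot$-system to recognise the surviving term as an approximate normal form --- is the intended one. However, you perform the two key steps in the opposite order from the paper, and the order matters: you first project into the $\bot$-system, obtaining a $\bot$-derivation of some $M_1 \tseq M$, and only then normalise under $\derred$, asserting that the $\derred$-normal form is again a $\bot$-derivation of some term $A$. That assertion is unjustified: $\derred$ acts on $\TurnlmuS$-derivations (you yourself have to pass back through Lem.~\ref{comparison}.\ref{bottom to full} to invoke Thm.~\ref{FSCD lemma}.\ref{typeable implies SN}), and nothing guarantees that its reducts remain $\bot$-derivations --- after contracting a derivation redex, the subterms typed with $`w$ in the result need not be the constant $\bot$. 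Your central claim that every non-$\bot$ subterm of $A$ carries a non-$`w$ type therefore does not apply to the normal form you actually obtain; you would need to re-project it into the $\bot$-system at the end, which is exactly the paper's step.

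The second gap is that, even granting $A$ is an approximate normal form, you only know $M_1 \rtcredbmu A$ with $M_1 \tseq M$, so placing $A$ in $\SetAppr(M)$ requires lifting the reduction along $\tseq$: if $M_1 \tseq M$ and $M_1 \rtcredbmu A$, then there is $N$ with $M \rtcredbmu N$ and $A \tseq N$. You call this routine, but it is a genuine additional lemma (needing preservation of $\tseq$ under both term and structural substitution, plus a case analysis over the three reduction rules), it appears nowhere in the paper, and Lem.~\ref{approximation lemma redbmu}.1 goes in the opposite direction. The paper sidesteps both problems by reversing the order: it first normalises the original $\TurnlmuS$-derivation of $M$ itself, so that Thm.~\ref{FSCD lemma}.\ref{relation bred derred} directly yields $M \rtcredbmu N$ for the term $N$ typed by the normal form, and only then applies Lem.~\ref{comparison}.\ref{full to bottom} to that normal-form derivation to extract $N' \tseq N$; since the resulting $\bot$-derivation has the same rule structure as the $\derred$-normal form, the ``no redexes outside $\bot$'' argument applies to $N'$ directly, and $N' \ele \SetAppr(M)$ is immediate from the definition. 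You should restructure your ($\Rightarrow$) argument accordingly, or else state and prove the two missing lemmas.
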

\vspace*{-2mm}
 \begin{Proof}
 \begin{description} 

 \item [$\Then$]
If $\D \dcol \derlmuS `G |- M : `S | `D $, then, by Thm.\,\ref {FSCD lemma}\ref{typeable implies SN}, $\SN(\D)$.
Let $\D' \dcol \derlmuS `G |- N : `S | `D $ be a normal form of $\D$ with respect to $\derred$, then by Thm.\,\ref{FSCD lemma}\ref {relation bred derred}, $M \rtcbred N$ and, by Lem.\,\ref{comparison}\sk\ref{full to bottom}, there exists $N' \tseq N$ such that $\D' \dcol \derlmuBot `G |- N' : `S | `D $.
So, in particular, $N'$ contains no redexes (no redexes typed with a type different form $`w$ since $\D'$ is in normal form, and none typed with $`w$ since only $\bottom$ can be typed with $`w$), so $N' \ele \AppNF$, and therefore $N' \ele \SetAppr(M)$.

 \item [$\If$]
Let $A \ele \SetAppr(M)$ be such that $\derlmuS `G |- A : `S | `D $.
Since $A \ele \SetAppr(M)$, there exists $N$ such that $M \rtcredbmu N$ and $A \tseq N$.
Then, by Lem.\,\ref {closed for tseq}, $\derlmuS `G |- N : `S | `D $, and, by Thm.\,\ref{FSCD lemma}\ref {closed for eq}, also $\derlmuS `G |- M : `S | `D $.
\qed

 \end{description}
 \end{Proof}

Using 
this last result, the characterisation of head-normalisation becomes easy to show.

 \begin{theorem} [Head-normalisation] \label{characterisation of head normal form}
There exists $`G$, $`A$, and $`D$ such that $\derlmuS `G |- M : `A | `D $, if and only if $M$ has a head normal form.
 \end{theorem}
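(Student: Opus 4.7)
The plan is to use Theorem~\ref{approximation result for TurnlmuS} in both directions, bridging it with Lemma~\ref{approximation lemma redbmu}(2), which already links the existence of a non-$\bot$ approximant to having a head normal form. The pivotal observation is that basic types $`A = `C\arr `y$ are distinct from $`w$, and that $\bot$ can only be typed with $`w$ (the only way to type $\bot$ is via rule $(\inter)$ with $I = \emptyset$, as noted after Definition~\ref{bottom type assignment definition}).

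For the $(\Then)$ direction I would argue: given $\derlmuS `G |- M : `A | `D $, Theorem~\ref{approximation result for TurnlmuS} yields $A \ele \SetAppr(M)$ with $\derlmuS `G |- A : `A | `D $. Since $`A \not= `w$, we conclude $A \not= \bot$. By definition of $\SetAppr(M)$, there exists $N$ with $M \rtcredbmu N$ and $A \dirapp N$, so by Lemma~\ref{approximation lemma redbmu}(2) $N$ is a head normal form of $M$.

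For the $(\If)$ direction, let $\lmuHNF$ be a head normal form of $M$. By Lemma~\ref{approximation lemma redbmu}(2) there is $A \ele \SetAppr$ with $A \dirapp \lmuHNF$ and $A \not= \bot$; since $M \rtcredbmu \lmuHNF$, also $A \ele \SetAppr(M)$. The key lemma I then prove, by straightforward structural induction on $A \not= \bot$, is that there exist $`G$, a basic type $`A$, and $`D$ such that $\derlmuS `G |- A : `A | `D $. In the variable case $A = xA_1\dots A_n$ I type $x$ by axiom with $`w \prod \dots \prod `w \prod `W \arr `y$ and type every $A_i$ with $`w$ using $(\inter)$ at $I = \emptyset$, then apply $(\App)$ $n$ times. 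For $A = `lx.A'$ (with $A'\not=\bot$) the IH gives a derivation of a basic type for $A'$; weakening the variable context with $x{:}`w$ if necessary (using Theorem~\ref{FSCD lemma}\ref{seqS admissible}) and applying $(\Abs)$ yields the result. For $A = `m`a.[`b]A'$ (with $A' \not= `m`g.[`d]A''$ and $A' \not= \bot$), the IH gives a basic typing $\derlmuS `G |- A' : \cont{D}\arr `y | `D_0 $; using Theorem~\ref{FSCD lemma}\ref{seqS admissible} I extend the name context to contain $`a{:}\cont{D}$ (and $`b{:}`C'$ for some $`C'$ when $`b \not= `a$), after which rule $(`m)$ or $(`m')$ closes the case. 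Finally, Theorem~\ref{approximation result for TurnlmuS} applied in the reverse direction transfers the typing from $A$ to $M$.

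The only delicate step is the structural induction on approximants, and within that the $`m$-case, where one must be careful to arrange the name context so that either $(`m)$ (when $`a = `b$) or $(`m')$ (when $`a \not= `b$) applies with an appropriate continuation type, invoking Theorem~\ref{FSCD lemma}\ref{seqS admissible} to weaken the context obtained from the induction hypothesis. Everything else is routine use of the approximation theorem and the already-established lemmas.
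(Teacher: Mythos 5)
Your argument is correct, and the only-if direction is exactly the paper's: approximation theorem, $`A \not= `w$ forces $A \not= \bot$, and the shape of a non-$\bot$ approximant forces the reduct $N$ to be a head-normal form. For the if direction you take a genuinely different (though closely parallel) route: the paper performs the structural induction directly on the head-normal form $N$ and then transfers the typing to $M$ by closure under conversion (Thm.\,\ref{FSCD lemma}.\ref{closed for eq}), whereas you first extract a non-$\bot$ approximant $A$ of $N$ via Lem.\,\ref{approximation lemma redbmu}(2), prove by induction on approximants that every $A \not= \bot$ receives a basic type, and then invoke the $\If$ half of Thm.\,\ref{approximation result for TurnlmuS}. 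The two inductions have isomorphic cases and identical context manipulations, so nothing is gained or lost in substance; your version is slightly more uniform (both directions pivot on the approximation theorem, and your key lemma is the $\TurnlmuS$-analogue of the paper's Lem.\,\ref{w-free normal forms}(2)), while the paper's is marginally more direct since the $\If$ half of the approximation theorem is itself proved via closure under conversion, which you therefore use indirectly anyway.

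One point to tighten in the $`m$-case: saying you ``extend the name context to contain $`a{:}\cont{D}$'' is not quite a legal application of Thm.\,\ref{FSCD lemma}.\ref{seqS admissible} when $`a$ already occurs in the name context delivered by the induction hypothesis, say with $`a{:}`C_0$ where $`C_0 \not\seqS \cont{D}$ --- and this does happen, since $`a$ may occur free in $A'$. The admissibility lemma only lets you pass to a context that is $\seqS$-\emph{smaller}, so the correct move is to replace $`C_0$ by $`C_0 \inter \cont{D}$ (the intersection of continuation types the paper defines for precisely this purpose), which is below both $`C_0$ and $\cont{D}$ and hence satisfies the side condition of $(`m)$, and analogously for $`b$ in the $(`m')$ case. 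You flag this as the delicate step and point at the right tool, so this is an imprecision rather than a gap, but the explicit intersection is what makes the step go through.
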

\vspace*{-2mm}
 \begin{Proof}
 \begin{description} 

 \item [\textit{only if}]
If $\derlmuS `G |- M : `A | `D $, then, by Thm.\,\ref {approximation result for TurnlmuS}, there exists an $A \ele \SetAppr(M)$ such that $ \derlmuBot `G |- A : `A | `D $.
Then, by Def.\,\ref {approximant definition}, there exists $N$ such that $M \rtcredbmu N$ and $A \tseq N$.
Since $`A \not= `w$, $A \not\same \bottom$, so we know that $A$ is either $xA_1\dots A_n$ $(n \geq 0)$, $`l x . A'$, or $`m`a.[`b]A'$ with $A' \not= `m`g.[`d]A''$.
Since $A \tseq N$, $N$ is either $xM_1\dots M_n$ $(n \geq 0)$, $`l x . P$, or $`m`a.[`b]P$ with $P \not= `m`g.[`d]Q$.
Then $N$ is in head-normal from and $M$ has a head-normal form.

 \item [\textit{if}]
If $M$ has a head-normal form, then there exists $N$ such that $M \rtcredbmu N$ and either:
 \begin{description} 

 \item [$N \same xM_1\dots M_n$]
Take $`G = x{:}`w \prod \dots \prod `w \prod `W \arr `y$ (with $n$ times $`w$) and $`A = `W\arr `y$.

 \item [$N \same `l x . P$]
Since $P$ is in head-normal form, by induction there are $`G'$, $`C$, $`y$, and $`D'$ such that $\derlmuS `G' |- P : `C\arr `y | `D' $.
If $x{:}`S \ele `G'$, take $`G = `G' \Except x$, and $`A = `S\prod`C \arr `y $; otherwise take $`G = `G'$ and $`A = `w\prod`C \arr `y$.
In either case, by rule $(\Abs)$, $\derlmuS `G |- `lx.P : `A | `D' $

 \item [{$N = `m`a.[`a]P$}]
Since $P$ is in head-normal form, by induction there are $`G'$, $`C$, $\cont{D}$, $`y$, and $`D'$ such that $\derlmuS `G' |- P : \cont{D}\arr `y | `a{:}`C,`D' $.
Take $`C' = `C \inter \cont{D}$, then by Thm.\,\ref{FSCD lemma}\sk\ref{closed for seqS} also $\derlmuS `G' |- P : \cont{D}\arr `y | `a{:}`C',`D' $, and since $`C' \seqS \cont{D}$, by rule $(`m)$ we get $\derlmuS `G' |- `m`a.[`a]P : `C'\arr `y | `D' $.

 \item [{$N = `m`a.[`b]P$, with $`a \not= `b$}]
Since $P$ is in head-normal form, by induction there are $`C$, $`C'$, $\cont{D}$ such that $\derlmuS `G' |- P : \cont{D}\arr `y | `a{:}`C,`b{:}`C',`D $ and $`C' \seqS \cont{D}$.
Take $`C'' = `C' \inter \cont{D}$, then by Thm.\,\ref{FSCD lemma}\sk\ref{closed for seqS} also $\derlmuS `G' |- P : \cont{D}\arr `y | `a{:}`C,`b{:}`C'',`D $, and since $`C'' \seqS \cont{D}$ we get $\derlmuS `G' |- `m`a.[`b]P : `C'\arr `y | `b{:}`C'',`D' $ by $(`m')$.

 \end{description}
Notice that in all cases, $\derlmuS `G |- N : `A | `D $, for some $`A$, and by Thm.\,\ref{FSCD lemma}\ref {closed for eq}, $\derlmuS `G |- M : `A | `D $.
\qed

 \end{description}
 \end{Proof}

 \section {Type assignment for (strong) normalisation} \label {omega free}

In this section we show the characterisation of both normalisation  and strong normalisation, for which we first define a notion of derivability obtained from $\TurnlmuS$ by restricting the use of the type assignment rule $(\inter)$ to at least two sub-derivations, thereby eliminating the possibility to assign $`w$ to a term.

 \begin{definition} [SN type assignment] \label{omega free type assignment definition}
 \begin{enumerate}
 \firstitem
We define the $`w$-free types by the grammar:
%
%
 \[ \begin{array}{rcl@{\quad}l@{\quad}l}
`A,`B &::=& `C\arr `y \\
`R,`S,`T &::=& `A_1 \inter \dots \inter `A_n & (n \geq 1) \\
`C,\cont{D} &::=& `W \mid `S\prod `C 
 \end{array} \]

 \item
{\def \TurnlmuSN {\Turn}
 \emph{SN type assignment} is defined using the natural deduction system
of Def.\,\ref{strict type assignment for lmu}, but allowing only $`w$-free types, so restricting rule $(\inter)$ to:%
 \[ \begin{array}[t]{rl}
(\inter) : &
\Inf	[n \geq 2]
	{\derlmuSN `G |- M : `A_i | `D
	\quad
	(\Forall i \ele \n)
	}{ \derlmuSN `G |- M : \AoI{n} | `D }
 \end{array} \] 
}%
We write $\derlmuSN `G |- M : `S | `D $ if this judgement is derivable using this system.

 \end{enumerate}
 \end{definition}
Notice that the only real change in the system compared to $\TurnlmuS$ is that $`w$ is no longer an intersection type, so in rule $(\inter)$, the empty intersection $`w$ is excluded.\footnote{With the aim of the characterisation of strong normalisation, it would have sufficed to only restrict rule $(\inter)$; we restrict the set of types as well in order to be able to characterise normalisation as well.}

The following properties hold:

 \begin{lemma} \label{inte derivable} \label{renaming lemma} \label {w-free properties} \label{thinning} \label{weakening}

 \begin{enumerate} 

 \firstitem \label{een}
If $ `S \seq `T $, then $ `S = \AoI{I} $, $ `T = \BoI{J}$, and for every $j \ele J$ there exists $i \ele I$ such that $`A_i = `B_j$.


 \item \label{drie}
$\derlmuSN `G,x{:}`S |- x : `T | `D $, if and only if $`S \seqS `T$.


 \item \label{free variables}
$\derlmuSN `G |- M : `S | `D \Then \derlmuSN { \{ x{:}`T \ele `G \mid x \ele \FV{M} \} } |- M : `S | { \{ `a{:}`C \ele `D \mid `a \ele \fn(M) \} } $.

 \item \label {w-free more essential basis}
$\derlmuSN `G |- M : `S | `D \And `G' \supseteq `G \And `D' \supseteq `D \Then \derlmuSN `G' |- M : `S | `D' $.

 \item \label {full to omega-free}
$\D \dcol \derlmuSN `G |- M : `S | `D \Implies \D \dcol \derlmuS `G |- M : `S | `D $.

 \end{enumerate}
 \end{lemma}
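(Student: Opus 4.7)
The plan is to prove the five clauses in the order (1), (2), (4), (3), (5), since each later part relies on earlier ones. Parts (1) and (5) are essentially structural observations, whereas parts (2), (3), (4) proceed by straightforward induction, either on the proof of $`S \seqS `T$ or on the derivation of $\derlmuSN `G |- M : `S | `D $.

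For part (1), I would induct on the derivation of $`S \seqS `T$ viewed as the closure of the five generating rules of Def.~\ref{strict types} under reflexivity and transitivity (which make $\seqS$ a partial order). Reflexivity supplies the identity witness map; the selection rule $\AoI{\n} \seqS `A_j$ gives a singleton witness; the intersection-introduction rule pastes together the witnesses from the inductive hypotheses; and transitivity composes witness maps $K \to J \to I$. The rules $`S \seqS `w$ and $`C \seqS \cont{D}$ do not apply here since both sides of the inclusion are intersections of strict types (possibly empty). For part (2), the forward direction inspects the last rule of the derivation of $\derlmuSN `G, x{:}`S |- x : `T | `D $: it is either $(\Ax)$, forcing $`T$ strict with $`S \seqS `T$ directly, or $(\inter)$, yielding $`T = \AoI{n}$ with $n \geq 2$ and sub-derivations at $(\Ax)$ giving $`S \seqS `A_i$ for each $i$, from which $`S \seqS `T$ follows by the intersection-introduction rule of $\seqS$. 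Conversely, given $`S \seqS `T$, part (1) writes $`T = `B_1 \inter \dots \inter `B_m$ with each $`B_j$ a component of the intersection forming $`S$; when $m = 1$ the rule $(\Ax)$ suffices, and when $m \geq 2$ the $m$ applications of $(\Ax)$ can be combined via the restricted rule $(\inter)$, whose side-condition $n \geq 2$ is met.

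Parts (4) and (3) are then established by a simultaneous induction on the derivation of $\derlmuSN `G |- M : `S | `D $. For the base case $(\Ax)$, only the declaration of the subject variable is relevant, so both thinning (to $\FV{M}$ and $\fn(M)$) and weakening are immediate by re-applying $(\Ax)$ with the same $`S \seqS `T$. The inductive steps for $(\App), (\Abs), (`m), (`m'), (\inter)$ apply the relevant IH to each sub-derivation and recombine with the same rule; in the inductive step of part (3), weakening (part (4)) is invoked to extend each sub-derivation's restricted context to the combined set of free variables/names of the parent term. Barendregt's convention ensures that bound variables/names in $(\Abs), (`m), (`m')$ never clash with context entries being added or removed. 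Part (5) is then immediate: the SN system is a literal syntactic restriction of $\TurnlmuS$ — every $`w$-free type is a strict type, and the SN constraint $n \geq 2$ on rule $(\inter)$ is a strict sub-case of the condition $I = \emptyset$ or $|I| \geq 2$ in $\TurnlmuS$ — so any SN-derivation is already a $\TurnlmuS$-derivation of the same shape.

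The main potential obstacle is the reverse direction of part (2): one must choose between $(\Ax)$ alone and $(\Ax)$ combined with $(\inter)$ while respecting the $n \geq 2$ restriction. Part (1) supplies exactly the structural information needed to make this choice correctly, so once part (1) is in place the remaining clauses are routine, mirroring standard generation, thinning, and weakening proofs for intersection-type systems.
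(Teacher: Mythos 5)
Your proposal is correct and supplies exactly the routine inductions that the paper leaves implicit: its own proof of this lemma consists of the single word ``Straightforward''. Your ordering (1), (2), (4), (3), (5) --- in particular using part (1) to justify the $n\geq 2$ case split in the reverse direction of (2), and invoking weakening to re-extend the restricted contexts in the thinning argument --- is the natural way to discharge it and is consistent with how these facts are used later in the paper.
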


 \begin{proof}
Straightforward.\qed
\end{proof}

As for $\TurnlmuS$, we can show that $(\seqS)$ is an admissible rule in $\TurnlmuSN$.

 \begin {lemma} \label{SN seqS admissible} \label{SN closed for seqS}
If $ \derlmuSN `G |- M : `S | `D $, and $`G'$, $`T$, and $`D'$ are all $`w$-free and satisfy $`G' \seqS `G$, $`D' \seqS `D$, and $`S \seqS `T$, then $ \derlmuSN `G' |- M : `T | `D' $.
 \end{lemma}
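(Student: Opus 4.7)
The plan is to mimic the proof of the admissibility of $(\seqS)$ in the full system (Thm.\,\ref{FSCD lemma}\ref{seqS admissible}), while staying inside the $`w$-free fragment. I proceed by induction on $\D \dcol \derlmuSN `G |- M : `S | `D $, using as the decisive structural fact Lem.\,\ref{w-free properties}\ref{een}: if $`S \seqS `T$, then both sides are intersections $\AoI{I}$ and $\BoI{J}$ and each $`B_j$ is literally one of the $`A_i$. Consequently every \emph{change of conclusion type} reduces to re-deriving a chosen component and gluing the pieces together with $(\inter)$.

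For the base case $(\Ax)$, one has $`G = `G_0, x{:}`R $ with $`R \seqS `A$, where $`S = `A$. From $`G' \seqS `G$ one extracts $x{:}`R' \ele `G'$ with $`R' \seqS `R$, and hence $`R' \seqS `A$ by transitivity. By Lem.\,\ref{w-free properties}\ref{een}, $`T = \BoI{m}$ with every $`B_j = `A$; rule $(\Ax)$ then yields $\derlmuSN `G' |- x : `B_j | `D' $ for each $j$, and a final $(\inter)$ (needed only if $m\geq 2$) combines them into $`T$. The $(\inter)$ case is analogous: each $`B_j$ in $`T$ matches some $`A_i$ for which the given derivation already supplies a subderivation, which the IH adapts to $`G',`D'$.

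For the remaining rules $(\Abs), (\App), (`m), (`m')$ the conclusion of $\D$ is a basic type $`A$, so Lem.\,\ref{w-free properties}\ref{een} forces every $`B_j$ in $`T$ to equal $`A$; it therefore suffices to re-derive $`A$ itself and, if $m\geq 2$, close with $(\inter)$. In each of these cases the IH is invoked on the premise(s), with the \emph{active} types held fixed by reflexivity (the functional type of the head in $(\App)$, the continuation type of $`a$ in the $`m$-rules, the body type in $(\Abs)$), while $`G' \seqS `G$ and $`D' \seqS `D$ propagate to the premise contexts (and, in $(\Abs)$, we use $`G',x{:}`R \seqS `G,x{:}`R$ by reflexivity on $`R$). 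Reapplying the original rule then yields $\derlmuSN `G' |- M : `A | `D' $.

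The only delicate point is the side conditions of the $`m$-rules. In $(`m)$ the requirement $`C \seqS \cont{D}$ survives because both $`C$ and $\cont{D}$ remain unchanged. In $(`m')$ the new name context has the form $`b{:}`C'',`D'''$ with $`C'' \seqS `C'$, so the required $`C'' \seqS \cont{D}$ follows by transitivity from the original $`C' \seqS \cont{D}$. The $`w$-freeness of every type involved is automatic: types internal to $\D$ are $`w$-free since $\D$ lives in $\TurnlmuSN$, and every new type comes from $`G', `T, `D'$, which are $`w$-free by hypothesis. I do not anticipate a genuine obstacle; the argument is essentially bookkeeping, organised throughout around Lem.\,\ref{w-free properties}\ref{een}.
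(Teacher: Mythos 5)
Your proof is correct and follows essentially the same route as the paper's: induction on the derivation with a case analysis on the last rule applied, using Lemma\,\ref{w-free properties}\ref{een} to control the shape of $`T$ and transitivity of $\seqS$ to discharge the side conditions of the $`m$-rules. Your explicit handling of the case where $`T$ is an iterated intersection of the basic conclusion type (closing with $(\inter)$ when $m\geq 2$) is if anything slightly more careful than the paper's shortcut of taking $`T$ equal to the basic type.
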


 \begin{Proof}
\Ignore{
 \begin{description} 

 \item[$\Ax$]
Then $M \same x$, $`S = `A$, and there exists $x{:}`R \ele `G$ such that $`R \seqS `A$.
Since $`G' \seqS `G$, there exists $x{:}`R' \ele `G'$ such that $`R' \seqS `R$.
Notice that then $`R' \seqS `T$, and Lem.\,\ref{w-free properties}\ref{drie}, $ \derlmuE `G' |- x : `T | `D' $.

 \item[$\Abs$]
Then $M = `lx.N$, $ `S = `R\prod `C\arr `y$ and $ \derlmuSN `G,x{:}`R |- N : `C\arr `y | `D $.
Since $`R\prod `C\arr `y \seqS `T$, we have $`S = `T$.
Then by induction $ \derlmuSN `G',x{:}`R |- N : `C\arr `y | `D' $, and we get $ \derlmuSN `G' |- `lx.N : `R\prod`C\arr `y | `D' $ by $(\Abs)$.

 \item[$\App$]
Then $M \same PQ$, $`S = `C\arr `y$, and there exists $`R$ such that $ \derlmuSN `G |- M : `R\prod `C\arr `y | `D $ and $ \derlmuSN `G |- N : `R | `D $.
If $`C\arr `y \seqS `T $, then $`T = `C\arr `y$, so by induction $ \derlmuSN `G' |- P : `R\prod `C\arr `y | `D' $ and by rule $(\App)$ we get $ \derlmuSN `G' |- PQ : `C\arr `y | `D' $.

 \item[$`m$]
Then $ M = `m`a.[`a]N $, $`S = `C\arr `y$, and $ \derlmuSN `G |- N : \cont{D}\arr `y | `a{:}`C,`D $ with $`C \seqS \cont{D}$.
Since $`C\arr `y \seqS `T$, in fact $`T = `C\arr `y$, so by induction $ \derlmuSN `G' |- N : \cont{D}\arr `y | `a{:}`C,`D' $.
Then also $ \derlmuSN `G' |- `m`a.[`a]N : `C\arr `y | `D' $ by rule $(`m)$.

 \item[$`m'$]
Then $ M = `m`a.[`b]N $, $`S = `C\arr `y$, $`D = `b{:}`C_0,`D_0$ and $ \derlmuSN `G |- N : \cont{D}\arr `y | `a{:}`C,`b{:}`C_0,`D_0 $ with $`C_0 \seqS \cont{D}$.
Since $`C\arr `y \seqS `T$, in fact $`T = `C\arr `y$.
Since $`D' \seqS `D$, there exist $`C'_0 \seqS `C_0$ and $`D'_0 \seqS `D_0$ such that $`D' = `b{:}`C'_0,`D'_0$ and $`a{:}`C,`b{:}`C'_0,`D'_0 \seqS `a{:}`C,`b{:}`C_0,`D_0$.
Then by induction we have $ \derlmuSN `G' |- N : \cont{D}\arr `y | `a{:}`C,`b{:}`C'_0,`D'_0 $.
Since $`C'_0 \seqS `C_0 \seqS \cont{D}$, we have $ \derlmuSN `G' |- `m`a.[`b]N : `C\arr `y | `D' $ by rule $(`m')$.%

 \item[$\inter$]
Then $`S = \AoI{I}$, and $ \derlmuSN `G |- M : `A_i | `D $, for all $i \ele I$.
Also, by Lemma~\ref{w-free properties}\ref{een} $`T = \AoI{J}$ with $J \subseteq I$, and $ \derlmuSN `G' |- M : `A_j | `D' $, for all $j \ele J$, so by rule $(\inter)$, $ \derlmuSN `G' |- M : `T | `D' $.
\qed

 \end{description}
}{Much the same as the proof for Thm.\,\ref{FSCD lemma}\ref {closed for seqS} in \cite{Bakel-FSCD'16}.\qed}
\end{Proof}


The following lemma shows a (limited) subject expansion result for $\TurnlmuSN$: it states that if a contraction of a redex is typeable, then so is the redex, provided that the operand $N$ is typeable in its own right; since $N$ might not appear in the contractum, we need to assume that separately. 
Notice that we demand that $N$ is typeable in the same contexts as the redex itself; this property would not hold once we consider contextual closure (in particular, when the reduction takes place under an abstraction); it might be that free names or variables in $N$ get bound in the context.

 \begin{lemma} 
\label{structural substitution lemma SN}
If $\derlmuSN `G |- M [N{`.}`g/`a] : `T | `g{:}`C,`D $ and $ \derlmuSN `G |- N : `B | `D $, then there exists $`S$ such that $ \derlmuSN `G |- M : `T | `a{:}`S\prod `C,`D $ and $\derlmuSN `G |- N : `S | `D $.
 \end{lemma}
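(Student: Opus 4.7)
The plan is structural induction on $M$. The intuition is that the substitution $M[N{`.}`g/`a]$ creates a fresh application against $N$ at every syntactic site $[`a]P$ of $M$, replacing it with $[`g]PN$; the type $`S$ to extract for $N$ is accordingly an intersection of the types that $N$ receives at these sites inside the derivation of $M[N{`.}`g/`a]$, and we then rebuild a derivation for $M$ in which the assignment $`a{:}`S\prod`C$ on the right caters for all these uses at once. The hypothesis $\derlmuSN `G |- N : `B | `D$ provides a fallback type when $`a$ does not occur in $M$ at all (so that $M[N{`.}`g/`a] \same M$ and no sites are available), which is needed because the ambient system is $`w$-free.

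The cases $M \same x$, $M \same `l y.P$, $M \same PQ$, and $M \same `m`b.[`d]P$ with $`d \ne `a$ are routine: the substitution merely distributes over the head constructor, so inverting the corresponding typing rule exposes typings of the proper subterms of $M[N{`.}`g/`a]$, to which the inductive hypothesis can be applied (twice, in the application case) to obtain types $`S_P$ (and $`S_Q$) for $N$. Setting $`S := `S_P \inter `S_Q$ (or $`S_P$ alone in the one-subterm cases), combining the $N$-derivations via rule $(\inter)$, and sharpening each $`a{:}`S_P\prod`C$ to $`a{:}`S\prod`C$ via Lemma \ref{SN seqS admissible} then lets me reapply the head rule and rebuild the derivation of $M$. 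When the IH is invoked under a $`l$- or $`m$-binding, the $N$-derivation coming from the main hypothesis must first be weakened (Lemma \ref{weakening}) to match the extended context, and the $N$-type returned by the IH must be pulled back into the original context via the free-variables Lemma \ref{w-free properties}\ref{free variables}; this is legal because the bound symbols do not occur free in $N$ by Barendregt's convention.

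The principal case is $M \same `m`b.[`a]P$, where BC permits $`b \ne `a$ and $`b \ne `g$, so that $M[N{`.}`g/`a] \same `m`b.[`g](P[N{`.}`g/`a])N$. Inverting $(`m')$ and then $(\App)$ on this outer term produces $`T \same `C'\arr`y$, a continuation type $\cont{D}'$, and an intersection type $`U$ such that $\derlmuSN `G |- P[N{`.}`g/`a] : `U\prod\cont{D}'\arr`y | `b{:}`C', `g{:}`C, `D$ and $\derlmuSN `G |- N : `U | `b{:}`C', `g{:}`C, `D$, together with the side condition $`C \seqS \cont{D}'$. After stripping the freshly-bound $`b$ from the name context of the $N$-derivation via the free-variables lemma, the IH on $P$ yields some $`S_P$ with $\derlmuSN `G |- P : `U\prod\cont{D}'\arr`y | `a{:}`S_P\prod`C, `b{:}`C', `D$ and $\derlmuSN `G |- N : `S_P | `D$. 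Setting $`S := `U \inter `S_P$, combining the two $N$-derivations via $(\inter)$, sharpening $`a{:}`S_P\prod`C$ into $`a{:}`S\prod`C$ in $P$'s derivation by Lemma \ref{SN seqS admissible}, and finally reapplying $(`m')$ with naming $[`a]$ completes the step; the new side condition $`S\prod`C \seqS `U\prod\cont{D}'$ follows at once from $`S \seqS `U$ and $`C \seqS \cont{D}'$. The principal obstacle throughout is this context bookkeeping: arranging that the derivations of $N$ produced by the IH and by the $(\App)$-inversion can be relocated (via weakening and the free-variables lemma) into a common ambient context so that $(\inter)$ applies, and that the sharpened name $`a{:}`S\prod`C$ is propagated consistently when rebuilding the derivation of $M$.
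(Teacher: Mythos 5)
Your term-level argument coincides with the paper's: the same inversions of $(`m')$ and $(\App)$ in the critical case $M \same `m`b.[`a]P$, the same choice of $`S$ as the intersection of the type $`U$ that $N$ receives at the freshly created application with the type $`S_P$ returned by the induction hypothesis on $P$, and the same appeal to Lem.\,\ref{SN seqS admissible} to sharpen $`a{:}`S_P\prod`C$ to $`a{:}`S\prod`C$ before reapplying the head rule; the context bookkeeping you describe (weakening in the variable case, recombining the $N$-typings with $(\inter)$) is likewise what the paper does.

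There is, however, one genuine gap: you induct on the structure of $M$ alone, and every case begins by inverting the typing rule that matches the head constructor of $M[N{`.}`g/`a]$. That inversion is only available when $`T$ is a basic type $`C'\arr`y$, because $(\Ax)$, $(\Abs)$, $(\App)$, $(`m)$ and $(`m')$ all conclude with a basic type; when $`T = `A_1\inter\dots\inter`A_n$ with $n \geq 2$ the last rule of the derivation is necessarily $(\inter)$, and your case analysis does not apply. Nor can this be absorbed into the structural induction, since the subderivations produced by $(\inter)$ concern the same term $M[N{`.}`g/`a]$, so the induction hypothesis on $M$ gives you nothing there. The paper resolves this with a nested induction, outermost on the structure of types and innermost on terms: the case $`T = \AoI{I}$ is dispatched first by applying the type-level induction hypothesis to each component $`A_i$, intersecting the resulting types $`S_i$ for $N$, and recombining via Lem.\,\ref{SN seqS admissible} and rule $(\inter)$. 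Adding that outer layer (or, equivalently, inducting on the derivation) repairs your proof; the rest stands as written.
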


 \begin{Proof}
By nested induction; the outermost is on the structure of types, and the innermost on the structure of terms.
\Ignore{
 \begin{description} 

 \item [$`T = \AoI{I}$]

If $\derlmuSN `G |- M [N{`.}`g/`a] : \AoI{I} | `g{:}`C,`D $, then by rule $(\inter)$,
$\derlmuSN `G |- M [N{`.}`g/`a] : `A_i | `g{:}`C,`D $, for every $i \ele I$.
By induction, for every $i \ele I$ there exists $`S_i$ such that $ \derlmuSN `G |- M : `A_i | `a{:}`S_i\prod `C,`D $ and $\derlmuSN `G |- N : `S_i | `D $.
Take $`S = \SoI{I}$, and notice that then $ `S \seqS `S_i $, for every $i \ele I$.
Then, by Lem.\,\ref{SN seqS admissible}, $ \derlmuSN `G |- M : `A_i | `a{:}`S\prod `C,`D $, and by rule $(\inter)$ we get both $\derlmuSN `G |- N : `S | `D $ and $ \derlmuSN `G |- M : `T | `a{:}`S\prod `C,`D $.

 \item[$`T = `C'\arr `y$]
By induction on 
terms; let $ \derlmuSN `G |- N : `B | `D $ and $  \derlmuSN `G |- M [N{`.}`g/`a] : `C'\arr `y | `g{:}`C,`D $.
}{We only show:}
 \begin {description} 

 \item [$ M\same x $]
Then $ x [N{`.}`g/`a] = x $. Take $`S = `B$, then by Lem.\,\ref{thinning}, also $ \derlmuSN `G |- x : `C'\arr `y | `a{:}`S\prod `C,`D $.

\Ignore{
 \item [$M\same `l y.P $]
Notice that $(`ly.P)[N{`.}`g/`a] = `ly.P[N{`.}`g/`a]$; then there exists $`R$ and $\cont{D}$ such that 
$ \derlmuSN `G,y{:}`R |- P [N{`.}`g/`a] : \cont{D}\arr `y | `g{:}`C,`D $ and $`C' = `R\prod \cont{D}$.
Then by induction there exists $`S$ such that $\derlmuSN `G,y{:}`R |- P : \cont{D}\arr `y | `a{:}`S\prod `C,`D $ and $ \derlmuSN `G |- N : `S | `D $, so we get $ \derlmuSN `G |- `ly.P : `C'\arr `y | `a{:}`S\prod `C,`D $ by $(\Abs)$.

 \item [$ M = PQ $]
Notice that $(PQ)[N{`.}`g/`a] = P[N{`.}`g/`a] \, Q[N{`.}`g/`a]$; then there exists $`R$ such that $\derlmuSN `G |- P\,[N{`.}`g/`a] : `R\prod `C'\arr `y | `g{:}`C,`D $ and $\derlmuSN `G |- Q\,[N{`.}`g/`a] : `R | `g{:}`C,`D $.
Then by induction, there are $`S_1$, $`S_2$ such that $ \derlmuSN `G |- P : `R\prod `C'\arr `y | `a{:}`S_1\prod `C,`D $ and $\derlmuSN `G |- N : `S_1 | `D $, as well as $ \derlmuSN `G |- Q : `R | `a{:}`S_2\prod `C,`D $ and $\derlmuSN `G |- N : `S_2 | `D $.
Take $`S = `S_1 \int `S_2$; notice that then $ `S\prod `C \seqS `S_i\prod `C$, so $ `a{:}`S\prod `C,`D \seqS `a{:}`S_i\prod `C,`D $, for $i = 1,2$.
Then by Lem.\,\ref{SN seqS admissible} both $ \derlmuSN `G |- P : `R\prod `C'\arr `y | `a{:}`S\prod `C,`D $ and $ \derlmuSN `G |- Q : `R | `a{:}`S\prod `C,`D $, and by $(\App)$ we get $ \derlmuSN `G |- PQ : `A | `a{:}`S\prod `C,`D $.
Notice that $\derlmuSN `G |- N : `S | `D $ follows by Lem.\,\ref{inte derivable}.

 \item [{$M\same `m`d.[`d]P$}]
Notice that $ (`m`d.[`d]P)[N{`.}`g/`a] = `m`d.[`d]P[N{`.}`g/`a] $.
Then there exists $\cont{D}$ such that $ \derlmuSN `G |- P[N{`.}`g/`a] : \cont{D}\arr `y | `d{:}`C',`g{:}`C,`D $ and $`C' \seqS \cont{D} $.
Then by induction there exists $`S$ such that $ \derlmuSN `G |- P : \cont{D}\arr `y | `d{:}`C',`a{:}`S\prod `C,`D $ and $ \derlmuSN `G |- N : `S | `D $. 
Then $\derlmuSN `G |- `m`d.[`d]P : `C'\arr `y | `a{:}`S\prod `C,`D $ follows from $(`m)$.

 \item [{$ M\same {`m`d.[`b]P} $, $`d \not= `b$, $`a \not= `b$}]
Notice that $ (`m`d.[`b]P)[N{`.}`g/`a] = `m`d.[`b]P[N{`.}`g/`a] $.
Then there exists $\cont{D},`C''$ such that $ \derlmuSN `G |- P[N{`.}`g/`a] : \cont{D}\arr `y | `d{:}`C',`b{:}`C'',`g{:}`C,`D' $, and $`D = `b{:}`C'',`D'$ and $`C'' \seqS \cont{D} $.
Then by induction there exists $`S$ such that $\derlmuSN `G |- P : \cont{D}\arr `y | `d{:}`C',`b{:}`C'',`a{:}`S\prod `C,`D' $ and $ \derlmuSN `G |- N : `S | `D $.
Then by $(`m')$, $ \derlmuSN `G |- `m`d.[`b]P : `C'\arr `y | `a{:}`S\prod `C,`D $.

 \item [{$ M\same {`m`d.[`a]P} $, $`d \not= `a$}]

Notice that then $ (`m`d.[`a]P)[N{`.}`g/`a] = `m`d.[`g]P[N{`.}`g/`a]N $.
Then $ \derlmuSN `G |- `m`d.[`g]P[N{`.}`g/`a]N : `C'\arr `y | `g{:}`C,`D $, and there exists $\cont{D}$ such that $ \derlmuSN `G |- P[N{`.}`g/`a]N : \cont{D}\arr `y | `d{:}`C',`g{:}`C,`D $ by rule $(`m') $, and $ `C \seqS \cont{D} $.
Then by $(\App)$ there is $`R$ such that $ \derlmuSN `G |- P[N{`.}`g/`a] : `R\prod \cont{D}\arr `y | `d{:}`C',`g{:}`C,`D $ and $ \derlmuSN `G |- N : `R | `D $.
By induction, there exists $`T$ such that $ \derlmuSN `G |- P : `R\prod \cont{D}\arr `y | `d{:}`C',`a{:}`T\prod `C,`D $ and $ \derlmuSN `G |- N : `T | `D $.
Take $`S = `R \inter `T$, then $ `S\prod \cont{D} \seqS `T \prod \cont{D} $; so by Lem.\,\ref{SN seqS admissible} also $ \derlmuSN `G |- P : `R\prod \cont{D}\arr `y | `d{:}`C',`a{:}`S\prod `C,`D $.
Since also $`S \prod `C \seqS `R \prod \cont{D}$, we get $\derlmuSN `G |- `m`d.[`a]P : `C'\arr `y | `a{:}`S\prod `C,`D $ by rule $(`m')$ and $ \derlmuSN `G |- N : `S | `D $ by rule $(\inter)$.
\qed

 \end{description}
}{All other cases follow by induction.\qed}

 \end{description}
 \end{Proof}

To prepare the characterisation of terms by their assignable types, we first prove that a term in $\lmu\bottom$-normal form is typeable without $`w$, if and only if it does not contain $\bottom$.
This forms the basis for the result that all normalisable terms are typeable without $`w$.
Notice that the first result is stated for $\TurnlmuS$.


 \begin{lemma} \label {w-free normal forms}

 \begin{enumerate} 

 \firstitem \label {to normal form}
If $\derlmuS `G |- A : `A | `D $, and $`G$, $`A$, and $`D$ are $`w$-free, then $A$ is $\bottom$-free.

 \item \label {from normal form}
If $A$ is $\bottom$-free, then there are $`G$, $`A$, and $`D$, such that $\derlmuSN `G |- A : `A | `D $.

 \end{enumerate}
 \end{lemma}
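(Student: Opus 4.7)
My plan is to prove both parts by induction on the structure of the approximant $A$, using the grammar of Def.\,\ref{approximant definition}.

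For part (1), the case $A = \bottom$ is immediate: only $`w$ is assignable to $\bottom$, contradicting the $`w$-freeness of $`A$. For $A = xA_1 \ldots A_n$, inversion of $(\Ax)$ and $(\App)$ gives $x{:}`S \ele `G$ with $`S \seqS `S_1 \prod \ldots \prod `S_n \prod `A$; since $`G$ is $`w$-free, $`S$ is a non-empty intersection of $`w$-free basic types, so by Lem.\,\ref{inte derivable}\ref{een} the basic type $`S_1 \prod \ldots \prod `S_n \prod `A$ (a component of $`S$) is $`w$-free, making each $`S_i$ $`w$-free, and the induction hypothesis applies to every $A_i$ (decomposing $`S_i$ via $(\inter)$ if it is a proper intersection). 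For $A = `lx.A'$, the decomposition $`A = `T \prod `C' \arr `y$ makes both the extended context $`G, x{:}`T$ and the subtype $`C'\arr`y$ $`w$-free, so induction concludes.

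The hard part will be $A = `m`a.[`b]A'$. Inversion yields $A' : \cont{D} \arr `y$ in the $`w$-free context extended by $`a{:}`C$, with $`C \seqS \cont{D}$, but $\cont{D}$ itself need not be $`w$-free (since $\seqS$ on continuation types allows components to be sub-intersections, possibly empty). The approximant grammar forces $A'$ to be either a variable application $yB_1 \ldots B_n$ or an abstraction $`lz.A''$. In the first sub-case, $y{:}`S_y \ele `G$ is $`w$-free, so the relevant basic component $`S_1 \prod \ldots \prod `S_n \prod \cont{D} \arr `y$ of $`S_y$ is $`w$-free, hence $\cont{D}$ and each $`S_i$ are $`w$-free and the induction hypothesis applies to each $B_i$. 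In the second sub-case I unfold via $(\Abs)$ and analyse $A''$ by a nested structural induction; since $A$ is finite, the chain of nested $`l$- and $`m$-abstractions terminates in a variable application, and any $\bottom$-argument would force a $`w$ component in the head variable's basic type, which by Lem.\,\ref{inte derivable}\ref{een} and the $\seqS$ chain linking the internal continuation types to $`C$ would propagate non-$`w$-freeness back to the outer $`G$, $`A$, or $`D$, contradicting the hypothesis.

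For part (2), I synthesise derivations by induction on the $\bottom$-free approximant $A$. For $A = x$, take $`G = \{x{:}`W \arr `y\}$, $`D = \emptyset$, and $`A = `W \arr `y$, using $(\Ax)$. For $A = xA_1 \ldots A_n$ with $n \geq 1$, the induction hypothesis provides derivations $\derlmuSN `G_i |- A_i : `A_i | `D_i$ with each $`A_i$ basic; I combine via context intersection $`G_1 \inter \cdots \inter `G_n$ augmented with $x{:}`A_1 \prod \ldots \prod `A_n \prod `W \arr `y$, lift the sub-derivations by weakening (Lem.\,\ref{weakening}), and apply $(\Ax)$ followed by $n$ uses of $(\App)$. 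For $A = `lx.A'$, apply $(\Abs)$ to the inductive derivation, weakening with an arbitrary $`w$-free type such as $`W \arr `y$ when $x \notele `G'$. For $A = `m`a.[`b]A'$, apply $(`m)$ or $(`m')$ taking $\cont{D} = `C'$ so that $`C \seqS \cont{D}$ holds by reflexivity, after weakening the name context so that $`a{:}`C'$ appears appropriately.
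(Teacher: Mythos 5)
Your part (2) and the $\bottom$, variable, abstraction and application cases of part (1) coincide with the paper's proof: the same structural induction on $A$, the same inversions of $(\Ax)$, $(\App)$ and $(\Abs)$, and for part (2) the same synthesis ($`W\arr`y$ at variables, context intersection at applications, intersecting $`a$'s continuation type with that of $A'$ at the $`m$-cases). Up to minor notational slips these parts are fine.

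The divergence is in the $`m$-cases of part (1), and that is where your proposal has a genuine gap. The paper disposes of these cases in one line: from the side condition $`C \seqS \cont{D}$ of rule $(`m)$ and the $`w$-freeness of $`C$ it concludes that $\cont{D}$ is $`w$-free and applies the induction hypothesis to $A'$ at type $\cont{D}\arr`y$. You reject that step (with some justification, since $`S \seqS `w$ holds for every $`S$, so a $`w$-free $`C$ can lie above a $\cont{D}$ with $`w$-components), but what you put in its place is not a proof. The sub-case where $A'$ is a variable application works, but the sub-case where $A'$ is an abstraction is exactly where all the difficulty sits, and you dispose of it with ``analyse $A''$ by a nested structural induction'' plus an unproved propagation claim. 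Once you descend through $(\Abs)$ the context becomes $`G, z{:}`T$ where $`T$ is a component of the very $\cont{D}$ you have just conceded may fail to be $`w$-free; the stated induction hypothesis is therefore no longer applicable, and your nested induction has no formulated statement to prove. What is needed is a strengthened invariant, roughly: a typeable approximant containing $\bottom$ forces a non-empty intersection with a non-$`w$-free basic component somewhere in its type or contexts. Note that the propagation you appeal to goes through $`S_i \seqS `T_i$ only when $`T_i$ is a non-empty intersection with a non-$`w$-free component and fails when $`T_i = `w$, so the invariant must be chosen to rule that situation out; none of this is set up in the proposal. Either carry out that strengthened induction explicitly, or argue (as the paper implicitly does) that the witnessing $\cont{D}$ in rule $(`m)$ can be taken $`w$-free; as it stands the hard case is asserted rather than proved.
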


 \begin{proof}
By induction on the structure of approximate normal forms.

 \begin{enumerate} 
 \item
 \begin{description} 

 \item [$A \same x$]
Immediate.

 \item [$A \same \bottom$]
Impossible, by inspecting the rules of $\TurnlmuS$.

 \item [$A \same `lx.A'$]
By $(\Abs)$, $`A = `T\prod`C\arr `y$, and $\derlmuS `G,x{:}`T |- A' : `C\arr `y | `D $.
Of course also $`G, x{:}`T$, and $`C\arr `y$ are $`w$-free, so by induction, $A'$ is $\bottom$-free, so also $`lx.A'$ is $\bottom$-free.

 \item [$A \same xA_0\dots A_n$]
Then by $(\App)$ and $(\Ax)$, $\derlmuS `G |- A_i : `S_i | `D $, and $x{:}\BoI{\m} \ele `G$, and, for some $j \ele m$, $`S_1\prod`S_2\prod \dots \prod `S_n\prod`C\arr `y = `B_j$ and $`A = `C\arr `y$.
Since each $`S_i$ occurs in $`B_j$, which occurs in $`G$, all are $`w$-free, so by induction each $A_i$ is $\bottom$-free.
Then also $xA_1\dots A_n$ is $\bottom$-free.

 \item[{$A \same `m`a.[`a]A'$, with $A' \not= `m`g.[`d]A''$}]
Then $`A = `C\arr `y$, and by $(`m)$ there exists $\cont{D}$ such that $`S \seqS \cont{D}$, and $\derlmuS `G |- A' : \cont{D}\arr `y | `a{:}`C,`D $.
Since $`C \seqS \cont{D}$, and $`C$ is $`w$-free, so is $\cont{D}$; then, by induction, $A'$ is $\bottom$-free, so so is $`m`a.[`a]A'$.

 \item[{$A \same `m`a.[`b]A'$, with $`a \not= `b$ and $ A' \not= `m`g.[`d]A''$}]
Then $`A = `C\arr `y$, and by $(`m')$ there exists $\cont{D}$, $\cont{D}'$ such that $`D=`b{:}\cont{D}',`D'$, $\cont{D}' \seqS \cont{D}$, and $\derlmuS `G |- A' : \cont{D}\arr `y | `a{:}`C,`b{:}\cont{D}',`D' $.
Since $\cont{D}' \seqS \cont{D}$, and $\cont{D}'$ is $`w$-free, so is $\cont{D}$; then, by induction, $A'$ is $\bottom$-free, so so is $`m`a.[`b]A'$.

 \end{description}

 \item
 \begin{description} 

 \item [$A \same x$]
Then $\derlmuSN x{:}`W\arr`y |- x : `W\arr`y | `D $.

 \item [$A \same `lx.A'$]
By induction $\derlmuSN `G |- A' : `C\arr `y | `D $.
If $x$ does not occur in $`G$, take a $`w$-free $`T$; otherwise, there exist $x{:}`T \ele `G$ and $`T$ is $`w$-free.
In either case, by $(\Abs)$ we obtain $\derlmuSN `G\Except x |- `lx.A' : `T\prod`C\arr `y | `D $.

 \item [$A \same xA_1\dots A_n$]
By induction there are $\Aotn$ such that $\derlmuSN `G |- A_i : `A_i | `D $ for every $\iotn$.
Then $\derlmuSN `G\inter\{x{:}`A_1\prod \dots \prod`A_n\arr `y\} |- xA_1\dots A_n : `y | `D $.

 \item[{$A \same `m`a.[`b]A'$, with $A' \not= `m`g.[`d]A''$}]
By induction $\derlmuSN `G |- A' : `C\arr `y | `b{:}\cont{D},`D $ for some $`G$, $`D$, $`C$, $\cont{D}$, and $`y$.
Then Lem.\,\ref{SN seqS admissible}, also $\derlmuSN `G |- A' : `C\arr `y | `b{:}\cont{D}\inter`C,`D $, so by rule $(`m)$, $\derlmuSN `G |- `m`a.[`a]A' : \cont{D}\inter`C\arr `y | `D $.

 \item[{$A \same `m`a.[`b]A'$, with $`a \not= `b$ and $ A' \not= `m`g.[`d]A''$}]
By induction $\derlmuSN `G |- A' : `C\arr `y | `a{:}`C',`b{:}\cont{D},`D $ for some $`G$, $`D$, $`C$, $`C'$, $\cont{D}$, and $`y$.
Then also $\derlmuSN `G |- A' : `C\arr `y | `a{:}`C',`b{:}\cont{D}\inter`C,`D $, so by rule $(`m')$, $\derlmuSN `G |- `m`a.[`b]A' : `C'\arr `y | `b{:}\cont{D}\inter`C,`D $.
 \QED

 \end{description}
 \end{enumerate}
 \end{proof}


Now, as also shown in \cite{Bakel-TCS'92}, it is possible to characterise normalisable terms.

 \begin{theorem} [Characterisation of Normalisation] \label{characterisation of normalisation}
There exists $`w$-free $`G$, $`D$, and $`A$ such that $\derlmuS `G |- M : `A | `D $, if and only if $M$ has a normal form.
 \end{theorem}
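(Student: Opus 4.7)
The plan is to combine Thm.~\ref{approximation result for TurnlmuS} with Lem.~\ref{w-free normal forms}, mediated by the elementary observation that a $\bottom$-free approximant is exactly a $\lmu$-normal form. Indeed, inspecting the grammar of $\SetAppr$, any $\bottom$-free approximant has the shape $x\Vect{A}$, $`lx.A'$, or $`m`a.[`b]A'$ with $A' \not= `m`g.[`d]A''$ (and each sub-approximant is itself $\bottom$-free), which is precisely the grammar of $\lmu$-normal forms with respect to $\redbmu$. A small auxiliary fact that I will use is that if $A \dirapp N$ and $A$ contains no $\bottom$, then $A = N$; this is a routine induction on the definition of $\dirapp$, since $\bottom$ is the only place where $\dirapp$ is not forced to be equality.

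\emph{Only if.} Suppose $\derlmuS `G |- M : `A | `D $ with $`G$, $`A$, and $`D$ all $`w$-free. By Thm.~\ref{approximation result for TurnlmuS} there exists $A \ele \SetAppr(M)$ with $\derlmuS `G |- A : `A | `D $. Applying Lem.~\ref{w-free normal forms}(\ref{to normal form}) to this derivation yields that $A$ is $\bottom$-free. By Def.~\ref{approximant definition} there is a term $N$ with $M \rtcredbmu N$ and $A \dirapp N$; by the auxiliary fact above, $A = N$, so $M$ reduces to $A$, which by the opening observation is a $\lmu$-normal form.

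\emph{If.} Conversely, let $N$ be a $\lmu$-normal form of $M$. Then $N$ is $\bottom$-free and $N \dirapp N$, so $N \ele \SetAppr(M)$. By Lem.~\ref{w-free normal forms}(\ref{from normal form}) there exist $`w$-free $`G$, $`A$, $`D$ such that $\derlmuSN `G |- N : `A | `D $, and by Lem.~\ref{w-free properties}(\ref{full to omega-free}) the same derivation witnesses $\derlmuS `G |- N : `A | `D $. Since $M \rtcredbmu N$, and hence $M \eqbmu N$, Thm.~\ref{FSCD lemma}(\ref{closed for eq}) gives $\derlmuS `G |- M : `A | `D $.

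The only mildly delicate ingredient is the bridging observation that a $\bottom$-free approximant $A$ satisfying $A \dirapp N$ must coincide with $N$; once this is in place, both directions collapse to direct appeals to the approximation theorem, the $\omega$-free characterisation of approximants, and subject reduction/expansion. No new inductions on term or type structure are required.
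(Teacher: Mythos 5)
Your proof is correct and follows essentially the same route as the paper's: the approximation theorem plus Lemma~\ref{w-free normal forms} in the only-if direction (with the observation that a $\bottom$-free approximant of a reduct is the reduct itself, hence a normal form), and Lemma~\ref{w-free normal forms} together with closure under conversion in the if direction. The auxiliary facts you spell out (that $\bottom$-free approximants coincide with normal forms, and that $A \dirapp N$ with $A$ $\bottom$-free forces $A = N$) are exactly what the paper uses implicitly.
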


 \begin{Proof}
 \begin{description} 

 \item [$\Then$]
If $\derlmuS `G |- M : `A | `D $, by Thm.\,\ref {approximation result for TurnlmuS} there exists $A \ele \SetAppr(M)$ such that $\derlmuS `G |- A : `A | `D $.
Since $`G$, $`A$, and $`D$ are $`w$-free, by Lem.\,\ref {w-free normal forms}\ref {to normal form}, this $A$ is $\bottom$-free.
By Def.\,\ref {definition approximate normal forms} there exists $N$ such that $M \rtcredbmu N$ and $A \tseq N$.
Since $A$ contains no $\bottom$, $A \equiv N$, so $N$ is a normal form, so $M$ has a normal form.

 \item [$\If$]
If $N$ is the normal form of $M$, then it is a $\bottom$-free approximate normal form.
By Lem.\,\ref {w-free normal forms}\ref {from normal form} there are $`G$, $`A$, and $`D$ such that $\derlmuSN `G |- N : `S | `D $.
By Lem.\,\ref{w-free properties}\ref {full to omega-free} also $\derlmuS `G |- N : `S | `D $, and by 
 Thm.\,\ref{FSCD lemma}\ref {closed for eq}, $\derlmuS `G |- M : `S | `D $, and $`G$, $`S$, and $`D$ are $`w$-free.%
\qed

 \end{description}
 \end{Proof}

In \cite{Bakel-FSCD'16} it is shown that it is possible to characterise the set of all terms that are strongly normalisable with respect to $\redbmu$, using Thm.\,\ref{FSCD lemma}\ref {typeable implies SN}, and the proof for the property that all terms in normal form can be typed in $\TurnlmuSN$, a property that follows here from Lem.\,\ref {w-free normal forms} (see the proof of the previous result).
Other than that, the proof is identical.

The following lemma shows that $\TurnlmuSN$ 
is closed under 
the expansion of redexes (notice that the result is not stated for arbitrary reduction steps, but only for terms that are proper redexes).

 \begin{lemma} \label {omega expansion} \label{w-free expansion}
 \begin{enumerate} 

 \firstitem If $\derlmuSN `G |- M[ N/x ] : `A | `D $ and $\derlmuSN `G |- N : `B | `D $, then $\derlmuSN `G |- (`lx.M)N : `A | `D $.

 \item \label{second part}
If $\derlmuSN `G |- `m`g.[`g]P[ Q{`.}`g/`b ]Q : `A | `D $ and $\derlmuSN `G |- Q : `B | `D $, then $\derlmuSN `G |- (`m`b.[`b]P)Q : `A | `D $.

 \item \label{third part}
If $\derlmuSN `G |- `m`g.[`d]P[Q{`.}`g/`b] : `A | `D $ (with $`b \not= `d$) and $\derlmuSN `G |- Q : `B | `D $, then $\derlmuSN `G |- (`m`b.[`d]P)Q : `A | `D $.

 \item \label{fourth part}
If $\derlmuSN `G |- `m`a.([`d]P)[`b/`g] : `A | `D $, then $\derlmuSN `G |- `m`a.[`b]`m`g.[`d]P : `A | `D $.

 \end{enumerate}
 \end{lemma}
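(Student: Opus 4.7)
I would attack each of the four parts by a common three-step strategy. Since intersection types arise only through $(\inter)$ (all other rules produce types of the form $`C\arr `y$), a derivation of $`A = \AoI{n}$ with $n \geq 2$ necessarily ends with $(\inter)$ applied to $n$ subderivations; so in every part it suffices to handle $`A = `C\arr `y$ basic and recombine with $(\inter)$ at the end. The three steps are: (i) invert the outermost rule of the contractum's derivation to expose typings of its immediate subterms; (ii) apply the appropriate substitution-reversal lemma to recover typings for the subterms of the redex; and (iii) rebuild with the introduction rules, using Lem.\,\ref{SN seqS admissible} to massage $\seqS$ side conditions and $(\inter)$ to re-intersect types where necessary.

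For part~1, I would first establish the term-substitution analogue of Lem.\,\ref{structural substitution lemma SN}: if $\derlmuSN `G |- M[N/x] : `A | `D$ and $\derlmuSN `G |- N : `B | `D$, then there exists $`S$ with $\derlmuSN `G,x{:}`S |- M : `A | `D$ and $\derlmuSN `G |- N : `S | `D$ (the $N$-hypothesis supplies $`S = `B$ when $x \notele \fv(M)$). The proof is a routine nested induction on $`A$ then on $M$, entirely parallel to Lem.\,\ref{structural substitution lemma SN}. Applying it, followed by $(\Abs)$ and then $(\App)$, yields the redex typing.

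For parts~2 and 3 I would use Lem.\,\ref{structural substitution lemma SN} directly. Taking part~2 as representative: stripping $(`m)$ from the hypothesis yields $\cont{D}_0$ with $`C \seqS \cont{D}_0$ and $\derlmuSN `G |- P[Q{`.}`g/`b]\,Q : \cont{D}_0\arr `y | `g{:}`C,`D$; stripping $(\App)$ extracts $`S$ with $\derlmuSN `G |- P[Q{`.}`g/`b] : `S\prod \cont{D}_0\arr `y | `g{:}`C,`D$ and $\derlmuSN `G |- Q : `S | `g{:}`C,`D$; then Lem.\,\ref{structural substitution lemma SN} produces $`U$ with $\derlmuSN `G |- P : `S\prod \cont{D}_0\arr `y | `b{:}`U\prod `C,`D$ and $\derlmuSN `G |- Q : `U | `D$. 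Setting $`V = `U \inter `S$ and relaxing $`b$'s type to $`V\prod `C$ via Lem.\,\ref{SN seqS admissible}, the $(`m)$-condition $`V\prod `C \seqS `S\prod \cont{D}_0$ holds (from $`V \seqS `S$ and $`C \seqS \cont{D}_0$), so $(`m)$ gives $\derlmuSN `G |- `m`b.[`b]P : `V\prod `C\arr `y | `D$; combined with $\derlmuSN `G |- Q : `V | `D$ (by $(\inter)$ from $Q : `U$ and $Q : `S$, the latter stripped of $`g$ via Lem.\,\ref{free variables} since $`g \notele \fn(Q)$), rule $(\App)$ delivers $(`m`b.[`b]P)Q : `C\arr `y$. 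Part~3 is analogous but skips the $(\App)$-inversion (its contractum is not an application) and uses $(`m')$ in place of $(`m)$; with no second witness to reconcile, taking $`V = `U$ suffices, and the extra $\derlmuSN `G |- Q : `B | `D$ hypothesis is what feeds Lem.\,\ref{structural substitution lemma SN}.

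For part~4 I would add the name-renaming analogue of the substitution lemma: if $\derlmuSN `G |- P[`b/`g] : `T | `b{:}`C',`D$ with $`g \notele `D$, then $\derlmuSN `G |- P : `T | `b{:}`C',`g{:}`C',`D$, proved by a straightforward induction on $P$ (occurrences originally bearing $`g$ are retyped from $`g{:}`C'$, and those originally at $`b$ keep their type). Inverting the outermost $(`m')$ of the contractum (Barendregt gives $`a \neq `b$), splitting on $`d = `g$ versus $`d \neq `g$, and applying the renaming lemma produces a typing of $P$ in a context extended by an entry $`g{:}\cont{F}$; rebuilding with an inner $(`m)$ (when $`d = `g$) or $(`m')$ (when $`d \neq `g$) for $`m`g.[`d]$, followed by an outer $(`m')$ for $`m`a.[`b]$, returns the redex typing, and the side conditions line up verbatim. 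The main obstacle throughout will be threading these $\seqS$ conditions through the reconstruction: the substitution or renaming lemma hands back a witness type $`U$ that need not match the $`S$ already extracted from the enclosing $(\App)$ or $(`m)$, and the standard fix (as in part~2) is to intersect the candidates and invoke Lem.\,\ref{SN seqS admissible} at precisely the right place.
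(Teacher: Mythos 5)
Your three-step strategy (invert, reverse the substitution, rebuild, reconciling witness types via $(\inter)$ and Lem.\,\ref{SN seqS admissible}) is sound, and for parts~2 and~3 it coincides with the paper's proof almost step for step, including the choice of intersecting the two extracted witnesses and the appeal to Lem.\,\ref{structural substitution lemma SN}. Where you genuinely diverge is part~1: you factor out a term-substitution-reversal lemma (the $[N/x]$ analogue of Lem.\,\ref{structural substitution lemma SN}) and then finish with one $(\Abs)$ and one $(\App)$, whereas the paper proves part~1 directly by a nested induction on types and terms in which the induction hypothesis is the expansion statement itself: in each case it applies the IH to the immediate subterms, obtaining typings of inner redexes such as $(`lx.P)N$, then inverts those through $(\Abs)$ and $(\App)$ to recover typings of $P$ under $x{:}`S$, intersects the extracted $x$-types, and rebuilds. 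Your factoring is the more modular presentation and avoids this repeated invert-after-IH detour, at the cost of one extra lemma whose proof is the same nested induction the paper performs inline. Likewise for part~4 you make explicit, as a renaming-reversal lemma, a step the paper uses silently (the passage from a typing of $P[`b/`g]$ to $ \derlmuSN `G |- P : \cont{D}\arr `y | `a{:}`C,`g{:}`C,`D $).

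One concrete slip in part~4: Barendregt's convention does not give you $`a \not= `b$ there. In $`m`a.[`b]`m`g.[`d]P$ the occurrence $[`b]$ may be the bound name $`a$ itself (the paper's own case analysis includes $M = `m`a.[`a]`m`g.[`g]P$ and $M = `m`a.[`a]`m`g.[`d]P$), so the outermost rule of the contractum's derivation can be $(`m)$ rather than $(`m')$; similarly $`d$ may coincide with $`a$ or with $`b$ after the renaming. Your two-way split on $`d = `g$ versus $`d \not= `g$ therefore has to be refined to the five cases (with subcases) that the paper enumerates. Each extra case is handled by exactly the method you describe, inverting and rebuilding with $(`m)$ in place of $(`m')$ where appropriate, so nothing breaks, but as written the case analysis is incomplete.
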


 \begin{proof}
 \begin{enumerate}

 \firstitem

By nested induction on the structure of types and the structure of terms.
We just show the case $`A = `C\arr `y$.

 \begin {description} 

 \item [$ M\same x $]
Then $M[N/x] = N$ and $ \derlmuSN `G |- N : `C\arr`y | `D $.
We have $ \derlmuSN `G,x{:}`C\arr`y |- x : `C\arr`y $ by rule $(\Ax)$.
Then $ \derlmuSN `G |- `lx.x : `B\prod `C\arr `y | `D $ by $(\Abs)$ and $\derlmuSN `G |- (`lx.x)N : `C\arr `y | `D $ by rule $(\App)$.

 \item [$ M\same y\not= x $]
We have $y[N/x] \same y$ and $x \notele \FV{y}$; by Lem.\,\ref{thinning}, $ \derlmuSN `G,x{:}`B |- y : `C\arr `y $.
Then $ \derlmuSN `G |- `lx.y : `B\prod `C\arr `y | `D $ by $(\Abs)$ and $\derlmuSN `G |- (`lx.y)N : `C\arr `y | `D $ by rule $(\App)$.

 \item [$M\same `l y.M' $]
If $ \derlmuSN `G,x{:}`S |- `ly.M : `C\arr `y | `D $, then there exist $`S$, $`C'$, $`R$ such that $ \derlmuSN `G,x{:}`S,y{:}`R |- M : `C'\arr `y | `D $ and $`C = `R\prod `C' $.
Then by induction, we get $ \derlmuSN `G,y{:}`R |- (`lx.M')N : `C'\arr `y | `D $.
Then, by rules $(\App)$ and $(\Abs)$, there exists $`T$ such that $ \derlmuSN `G,y{:}`R,x{:}`T |- M' : `C'\arr `y | `D $ and (also using Lem.\,\ref{thinning}) $ \derlmuSN `G |- N : `T | `D $.
But then we get $ \derlmuSN `G |- `lxy.M' : `T\prod `R\prod `C'\arr `y | `D $ by $(\Abs)$, and by rule $(\App)$ also $ \derlmuSN `G |- (`lxy.M')N : `R\prod `C'\arr `y | `D $.

 \item [$ M = PQ $]
Then there exists $`R$ such that 
$\derlmuSN `G |- P\,[N/x] : `R\prod `C\arr `y | `D $ and $\derlmuSN `G |- Q[N/x] : `R | `D $.
Then by induction, both $\derlmuSN `G |- (`lx.P)N : `R\prod `C\arr `y | `D $ and $\derlmuSN `G |- (`lx.Q)N : `R | `D $.
Then by rules $(\Abs)$, $(\App)$, and $(\inter)$ there are $`S$, $`C_i$, $`y_i$, $`T_i$ ($i \ele \n$) such that $`R = \inter_{\n}`C_i\arr`y_i$, and 
$\derlmuSN `G,x{:}`S |- P : `R\prod `C\arr `y | `D $ and $\derlmuSN `G |- N : `S | `D $,  
as well as $\derlmuSN `G,x{:}`T_i |- Q : `C_i\arr `y_i | `D $ and $\derlmuSN `G |- N : `T_i | `D $, for all $i\ele \n$ (notice that, as above, we can assume that $x$ is not free in $N$).

By $(\inter)$ and Lem.\,\ref{SN closed for seqS} we have 
$\derlmuSN `G,x{:}`S\inter \inter_{\n}`T_i |- P : `R\prod `C\arr `y | `D $ and 
$\derlmuSN `G,x{:}`S\inter \inter_{\n}`T_i |- Q : `R | `D $ as well as 
$\derlmuSN `G |- N : `S\inter \inter_{\n}`T_i | `D $.
By $(\App)$ and Lem.\,\ref{SN closed for seqS} we get $\derlmuSN `G,x{:}`S\inter \inter_{\n}`T_i |- PQ : `C\arr `y | `D $ and by $(\Abs)$ $\derlmuSN `G |- `lx.PQ : `S\inter \inter_{\n}`T_i\prod `C\arr `y | `D $.
Then by $(\App)$ we obtain $\derlmuSN `G |- (`lx.PQ)N : `C\arr `y | `D $.

 \item [{$M\same `m`a.[`a]P$}]
Then there exists $\cont{D}$ such that $ \derlmuSN `G |- P[N/x] : \cont{D}\arr`y | `a{:}`C,`D $ with $`C \seqS \cont{D}$, and by induction $ \derlmuSN `G |- (`lx.P)N : \cont{D}\arr`y | `a{:}`C,`D $.
Then by rules $(\Abs)$ and $(\App)$, there exists $`S$ such that $ \derlmuSN `G,x{:}`S |- P : \cont{D}\arr`y | `a{:}`C,`D $ and $ \derlmuSN `G |- N : `S | `D $ (we can assume that $`a$ is not free in $N$).
Then by rule $(`m)$, $ \derlmuSN `G,x{:}`S |- `m`a.[`a]P : `C\arr`y | `D $ and by $(\Abs)$ and $(\App)$ we get $ \derlmuSN `G |- (`lx.`m`a.[`a]P)N : `C\arr`y | `D $.

 \item [{$M\same `m`a.[`b]P$, $`a \not= `b$}]
Then there exists $`C'$, $\cont{D}$ such that $ \derlmuSN `G |- P[N/x] : \cont{D}\arr`y | `a{:}`C,`b{:}`C',`D $ with $\cont{D} \seqS `C$, and by induction $ \derlmuSN `G |- (`lx.P)N : \cont{D}\arr`y | `a{:}`C,`b{:}`C',`D $.
Then by rules $(\Abs)$ and $(\App)$, there exists $`S$ such that $ \derlmuSN `G,x{:}`S |- P : \cont{D}\arr`y | `a{:}`C,`b{:}`C',`D $ and $ \derlmuSN `G |- N : `S | `b{:}`C',`D $ (we can assume that $`a$ is not free in $N$).
Then by rule $(`m')$, $ \derlmuSN `G,x{:}`S |- `m`a.[`b]P : `C\arr`y | `b{:}`C',`D $ and by $(\Abs)$ and $(\App)$ we get $ \derlmuSN `G |- (`lx.`m`a.[`b]P)N : `C\arr`y | `b{:}`C',`D $.

 \end {description}

 \item 
If $ \derlmuSN `G |- `m`g . [`g]P [Q{`.}`g / `b]Q : `A | `D $, then by rule $(`m)$ there are $`C,\cont{D},`y$ such that $A = `C\arr `y$, $`C \seqS \cont{D}$ and
$ \derlmuSN `G |- P [Q{`.}`g / `b]Q : \cont{D}\arr `y | `g{:}`C,`D $.
Then by rule $(\App)$ there exists $`S_1$ such that
$ \derlmuSN `G |- P [Q{`.}`g / `b] : `S_1\prod\cont{D}\arr `y | `g{:}`C,`D $
and $ \derlmuSN `G |- Q : `S_1 | `g{:}`C,`D $.

By Lem.\,\ref{structural substitution lemma SN} there exists $`S_2$ such that $ \derlmuSN `G |- P : `S_1\prod\cont{D}\arr `y | `b{:}`S_2\prod`C,`D $ and $ \derlmuSN `G |- Q : `S_2 | `D $.
Take $`S = `S_1 \inter `S_2$, then $ \derlmuSN `G |- Q : `S | `D $ by Lem.\,\ref{inte derivable}, and $ \derlmuSN `G |- P : `S_1\prod\cont{D}\arr `y | `b{:}`S\prod`C,`D $ by Lem.\,\ref{SN seqS admissible}.
Then $`S\prod`C \seqS `S_1\prod\cont{D}$, so by rule $(`m)$,
$ \derlmuSN `G |- `m`b.[`b]P : `S\prod`C\arr `y | `D $; then, by rule $(\App)$, we get $ \derlmuSN `G |- (`m`b.[`b]P)Q : `A | `D $.

 \item 
If $ \derlmuSN `G |- `m`g . [`d]P [Q{`.}`g / `b] : `A | `D $, there are $`D',`C,`C',\cont{D},`y$ such that $ `A = `C\arr `y$, $`D = `d{:}`C',`D'$, $`C' \seqS \cont{D}$, and $ \derlmuSN `G |- P [Q{`.}`g / `b] : \cont{D}\arr `y | `g{:}`C,`d{:}`C',`D' $.
By Lem.\,\ref{structural substitution lemma SN}, there exists $`S$ such that
$ \derlmuSN `G |- P : \cont{D}\arr `y | `b{:}`S\prod`C,`d{:}`C',`D' $
and $ \derlmuSN `G |- Q : `S | `d{:}`C',`D' $.
By rule $(`m)$, we get
$ \derlmuSN `G |- `m`b.[`d]P : `S\prod`C\arr `y | `d{:}`C',`D' $ and we get $ \derlmuSN `G |- (`m`b.[`d]P)Q : `A | `D $ by rule $(\App)$.

 \item 
We distinguish the following cases (where we assume that distinct identifiers are not equal, and $N = `m`a.([`d]P)[`b/`g]$ and $M = `m`a.[`b]`m`g.[`d]P$):

 \begin {description}

 \item[{$N = `m`a.[`a]P[`a/`g]$}]
By rule $(`m)$, there are $`D',`C,\cont{D},`y$ such that $ \derlmuSN `G |- M[`a/`g] : \cont{D}\arr `y | `a{:}`C,`D $, $`A = `C\arr `y$, and $`C \seqS \cont{D}$.
Then also $ \derlmuSN `G |- P : \cont{D}\arr `y | `a{:}`C,`g{:}`C,`D $.
Then either:

 \begin {description}

 \item[{$M = `m`a.[`a]`m`g.[`g]P$}] 
By rule $(`m)$, $ \derlmuSN `G |- `m`g.[`g]P : `C\arr `y | `a{:}`C,`D $, and again by rule $(`m)$, $ \derlmuSN `G |- `m`a.[`a]`m`g.[`g]P : `C\arr `y | `D $.

 \item[{$M = `m`a.[`a]`m`g.[`a]P$}] 
By rule $(`m')$, $ \derlmuSN `G |- `m`g.[`a]P : `C\arr `y | `a{:}`C,`D $, and by rule $(`m)$ we get $ \derlmuSN `G |- `m`a.[`a]`m`g.[`a]P : `C\arr `y | `D $.

 \end {description}

 \item[{$N = `m`a.[`a]P[`b/`g]$}]
Then $M = `m`a.[`b]`m`g.[`a]P$. 
By rule $(`m)$, there are $`D',`C,`C',\cont{D},`y$ such that $`A = `C\arr `y$, $`D = `b{:}`C',`D'$, $ \derlmuSN `G |- P[`b/`g] : \cont{D}\arr `y | `a{:}`C,`b{:}`C',`D' $, and $`C \seqS \cont{D}$; then also $ \derlmuSN `G |- P : \cont{D}\arr `y | `a{:}`C,`g{:}`C',`b{:}`C',`D' $.
Then we have $ \derlmuSN `G |- `m`g.[`a]P : `C'\arr `y | `a{:}`C,`b{:}`C',`D' $ by rule $(`m')$, and $ \derlmuSN `G |- `m`a.[`b]`m`g.[`a]P : `C\arr `y | `b{:}`C',`D' $ again by rule $(`m')$.

 \item[{$N = `m`a.[`b]P[`b/`g]$}]
By rule $(`m')$, there are $`D',`C,`C',\cont{D},`y$ such that $`A = `C\arr `y$, $`D = `b{:}`C',`D'$, $ \derlmuSN `G |- P[`b/`g] : \cont{D}\arr `y | `a{:}`C,`b{:}`C',`D' $, and $`C' \seqS \cont{D}$.
Then 
$ \derlmuSN `G |- P : `C\arr `y | `a{:}`C,`g{:}`C',`b{:}`C',`D' $, and either:

 \begin {description}

 \item[{$M = `m`a.[`b]`m`g.[`g]P$}] 
By rule $(`m)$ we get $ \derlmuSN `G |- `m`g.[`g]P : `C'\arr `y | `a{:}`C,`b{:}`C',`D' $, and by rule $(`m')$ we get $ \derlmuSN `G |- `m`a.[`b]`m`g.[`g]P : `C\arr `y | `b{:}`C',`D' $.

 \item[{$M = `m`a.[`b]`m`g.[`b]P$}] 
By rule $(`m')$ we get $ \derlmuSN `G |- `m`g.[`b]P : `C'\arr `y | `a{:}`C,`b{:}`C',`D' $ and again by rule $(`m')$ we get $ \derlmuSN `G |- `m`a.[`b]`m`g.[`b]P : `C\arr `y | `b{:}`C',`D' $.

 \end {description}

 \item[{$N = `m`a.[`d]P[`a/`g]$}]
Then $M = `m`a.[`a]`m`g.[`d]P$. 
By rule $(`m')$, there are $`D',`C,\cont{D},`y$ such that $`D = `d{:}`C',`D'$, $`A = `C\arr `y$, $`C' \seqS \cont{D}$, and $ \derlmuSN `G |- P[`a/`g] : \cont{D}\arr `y | `a{:}`C,`d{:}`C',`D' $.
Then also $ \derlmuSN `G |- P : \cont{D}\arr `y | `a{:}`C,`g{:}`C,`d{:}`C',`D' $.
We get $ \derlmuSN `G |- `m`g.[`d]P : `C\arr `y | `a{:}`C,`d{:}`C',`D' $ by rule $(`m')$, and $ \derlmuSN `G |- `m`a.[`a]`m`g.[`d]P : `C\arr `y | `d{:}`C',`D' $ by rule $(`m)$.

 \item[{$N = `m`a.[`d]P[`b/`g]$}]
Then $M = `m`a.[`b]`m`g.[`d]P$. 
By rule $(`m')$, there are $`D',`C,`C',`C'',\cont{D},`y$ such that $`A = `C\arr `y$, $`D = `b{:}`C',`d{:}`C'',`D'$, $`C'' \seqS \cont{D}$, and $ \derlmuSN `G |- P[`b/`g] : \cont{D}\arr `y | `a{:}`C,`b{:}`C',`d{:}`C'',`D' $; then also $ \derlmuSN `G |- P : \cont{D}\arr `y | `a{:}`C,`b{:}`C',`g{:}`C',`d{:}`C'',`D' $.
By rule $(`m')$ we get $ \derlmuSN `G |- `m`g.[`d]P : `C'\arr `y | `a{:}`C,`b{:}`C',`d{:}`C'',`D' $ and we get $ \derlmuSN `G |- `m`a.[`b]`m`g.[`d]P : `C\arr `y | `b{:}`C',`d{:}`C'',`D' $ again by $(`m')$.
\qed

 \end {description}

 \end{enumerate}
 \end{proof}

Thm.\,\ref {SN theorem} below shows that the set of strongly normalisable terms is exactly the set of terms typeable in the intersection system without using the type constant $`w$.
The proof goes by induction on the leftmost outermost reduction path.
First we introduce the notion of leftmost, outer-most reduction.

 \begin{definition}
An occurrence of a redex {\Redex} in a term $M$ is called the {\em leftmost, outermost redex of $M$} ($\Lor{M}$), if:
 \begin{enumerate} 
 \item There is no redex $\Redex'$ in $M$ such that $\Redex'= \Cont[\Redex]$ with $\Cont[-]\not= [-]$ (\emph{outer-most});
 \item There is no redex $\Redex'$ in $M$ such that 
$M = \Cont_0[{ \Cont_1[\Redex'] \, \Cont_2[\Redex] }] $ ({\em leftmost}).
 \end{enumerate}
We write $M \lored N$ is used to indicate that $M$ reduces to $N$ by contracting $\Lor{M}$.

 \end{definition}

The following lemma formulates a subject expansion result for $\TurnlmuSN$ with respect to left-most outer-most reduction.


 \begin{lemma} \label {l.o. lemma}
Assume $M \lored N$, and $\derlmuSN `G |- N : `C\arr `y | `D $; if $\Lor{M} = PQ$ also assume that $\derlmuSN `G_0 |- Q : `B | `D_0 $. 
Then there exists $`G', `D', `C'$ such that $\derlmuSN `G' |- M : `C'\arr `y | `D' $.
 \end{lemma}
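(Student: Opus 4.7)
The plan is to proceed by structural induction on $M$, with a case split on the location of $\Lor{M}$ inside $M$. The auxiliary hypothesis on $Q$ provides the typing of the operand that is needed whenever the top-level redex is of application shape, so it travels with the induction unchanged.

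The base case is when $\Lor{M}$ coincides with $M$ itself, so $M$ is a top-level redex and $N$ is its contractum. There are four sub-cases corresponding to $(`b)$, the two clauses of $(`m)$, and $(\Rename)$, each handled directly by the appropriate part of Lem.\,\ref{w-free expansion}. For the three application-shaped sub-cases one has $\Lor{M} = M = PQ$, and the assumption $\derlmuSN `G_0 |- Q : `B | `D_0$ supplies precisely the operand typing that lemma requires; the conclusion follows with $`G' = `G \inter `G_0$, $`D' = `D \inter `D_0$ (up to weakening via Lem.\,\ref{weakening}) and $`C' = `C$.

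For the inductive cases, $\Lor{M}$ lies strictly inside $M$, so $M$ and $N$ share their outermost constructor and the derivation of $N$ can be inverted at its last rule. For $M = `lx.M'$, I would invert $(\Abs)$, apply the IH to $M' \lored N'$ (the $Q$-hypothesis is preserved after $`a$-conversion to keep $x$ fresh from $`G_0$), and reapply $(\Abs)$. For $M = `m`a.[`b]M'$, proceed analogously by inverting $(`m)$ or $(`m')$ according to whether $`b = `a$, applying the IH to $M'$, and reapplying the same rule, using Lem.\,\ref{SN closed for seqS} to accommodate the $\seqS$ side condition on continuation types. For $M = M_1 M_2$, the redex sits either in $M_1$ or (if $M_1$ is already in normal form) in $M_2$; in both sub-cases invert the final $(\App)$ of the derivation of $N$, apply the IH to the side containing the redex, and reconstruct the application via $(\App)$, $(\inter)$, and Lem.\,\ref{SN closed for seqS} to align intersection-typed components.

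The hardest step will be the application case, because the IH as stated only preserves the final base type $`y$ and not the intermediate shape of the continuation type. When the redex lies in the argument $M_2$, the derivation of $N = M_1 N_2$ assigns $N_2$ an intersection type $`S = \AoI{m}$ whose strict components $`C_i\arr `y_i$ must each be recovered for $M_2$ by a separate invocation of the IH, the resulting derivations then merged through $(\inter)$ after harmonising contexts with Lem.\,\ref{SN seqS admissible}. When the redex sits in the function part $M_1$, one exploits that the inverted $(\App)$ has already pinned down the required target $`S\prod`C\arr `y$ for $N_1$ (a strict function type), so that the IH applied at this target yields a typing of $M_1$ that, together with the unchanged typing of $M_2$ inherited from the derivation of $N$ and possibly an appeal to Lem.\,\ref{SN closed for seqS}, can be reassembled via $(\App)$ into a typing of $M$ at some $`C'\arr `y$. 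In effect, the proof requires a mild strengthening of the inductive statement that tracks the outer derivation structure away from the redex position; once that book-keeping is set up, the base type $`y$ is carried through the induction unchanged, exactly as required by the statement.
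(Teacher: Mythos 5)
There is a genuine gap in your treatment of the application case, which is precisely where the difficulty of this lemma lies. Your plan is to write $M = M_1M_2$, invert the final $(\App)$ in the derivation of $N$, apply the induction hypothesis to whichever side contains the redex, and reassemble with $(\App)$. But the induction hypothesis only guarantees \emph{some} typing $\derlmuSN `G' |- M_i : `C'\arr `y | `D' $ in which the continuation type $`C'$ (and the contexts) may be completely different from the one obtained by inversion; only the head constant $`y$ survives. Since $\seqS$ between basic types is essentially syntactic identity (Lem.\,\ref{w-free properties}\ref{een}), there is no subsumption step that reconciles the new type of the changed subterm with the type demanded by the other premise of $(\App)$, so the rule cannot be reapplied. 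Your idea of recovering each strict component of the intersection type of $N_2$ by separate invocations of the IH fails for the same reason: each invocation returns an arbitrary $`C'_i\arr`y_i$, not the $`C_i\arr`y_i$ you started from; and the same mismatch arises when the redex sits in the function part. You acknowledge the problem and appeal to ``a mild strengthening of the inductive statement'', but you never state one, and no strengthening of the IH is what is needed here --- the existential quantification over $`G'$, $`D'$, $`C'$ in the conclusion is irreducible.

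The paper resolves this not by strengthening the IH but by decomposing the term as a spine $M \same V P_1\dots P_n$. If the leftmost--outermost redex is the head $V$ itself, then Lem.\,\ref{w-free expansion} applies at the full spine type $`S_1\prod \dots \prod `S_n\prod `C\arr `y $, and that lemma (unlike the IH) preserves the type exactly, so the spine is rebuilt by $(\App)$ unchanged. If instead the redex lies inside some argument $P_j$, then $V$ is necessarily a term variable $y$ (were $V$ an abstraction, a $`m$-abstraction or a redex, there would be an outer or more leftward redex), and one applies the IH to $P_j$ once, obtaining some type $`B$, and then simply re-types the head variable in the context as $y{:}`S_1\prod \dots `B \dots \prod `S_n\prod `C\arr `y $. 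This freedom to adjust the context entry of the head variable is the missing idea in your proposal; without it the application case does not go through. Your remaining cases (abstraction and the two $`m$ cases, repaired with Lem.\,\ref{SN seqS admissible}) do match the paper's argument.
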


 \begin{proof}

We reason by induction on the structure of terms:
 \begin{description}

 \item[$M \same V P_1 \dots P_n$]
We distinguish two cases:
 \begin{enumerate}

 \item
$V$ is a $\redbmu$-redex, and $N \same V' P_1\dots P_n $, where $V'$ is the result of contracting $V$.
From the fact that $\derlmuSN `G |- V' P_1\dots P_n : `C\arr `y | `D $, we know there are $`S_1, \ldots, `S_n$ such that $\derlmuSN `G |- V' : `S_1\prod \dots \prod `S_n\prod `C\arr `y | `D $, and $\derlmuSN `G |- P_i : `S_i | `D $ for all $\iotn$.
Then by Lem.\,\ref{w-free expansion}, $\derlmuSN `G |- V : `S_1\prod \dots \prod `S_n\prod `C\arr `y | `D $, so also $\derlmuSN `G |- V P_1\dots P_n : `C\arr `y | `D $.

 \item
$V\same y$, so there exists $\jotn$ such that $\Lor{M} = \Lor{P_j}$, $P_j \lored P_j'$, and $N \same y P_1\dots P'\dots P_n$.
From $\derlmuSN `G |- y P_1\dots P_j'\dots P_n : `C\arr `y | `D $, we know there are $`S_1, \ldots, `S_n$ such that $\derlmuSN `G |- y : `S_1\prod \dots \prod `S_n\prod `C\arr `y | `D $, and $\derlmuSN `G |- P_i : `S_i | `D $ for all $i \not= j \ele \n$, and $\derlmuSN `G |- P'_j : `S_j | `D $.
Notice that then there exists $y{:}`T \ele `G$ such that $`T \seqS `S_1\prod \dots \prod `S_n\prod `C\arr `y $.

Then, by induction, there are $`G_j$, $`D_j$, and $`B $ such that $\derlmuSN `G_j |- P_j : `B | `D_j $.
Then 
\[ \derlmuSN `G\inter`G_j\inter \Set{ y{:}`S_1\prod \dots `B \dots \prod `S_n\prod `C\arr `y } |- y P_1\dots P_j \dots P_n : `C\arr `y | `D_j \inter `D. \]

 \end{enumerate}

 \item [$ M \same `ly.M' $] 
If $M \lored N$, then $N = `ly.N'$ and $M' \lored N'$.
Then there exists $`S,\cont{D}$ such that $\derlmuSN `G,y{:}`S |- N' : \cont{D}\arr `y | `D $ and $`C = `S\prod\cont{D}$.
By induction, there exists $`G'$, $`D'$, $`S'$, and $\cont{D}'$ such that $\derlmuSN `G',y{:}`S' |- M' : \cont{D}'\arr `y | `D' $.
Then, by rule $(\Abs)$, $\derlmuSN `G' |- `ly.M' : `S'\prod\cont{D}'\arr `y | `D' $.

 \item [{$M \same `m`a.[`a]P$}]
Then $N = `m`a.[`a]Q$ and $P \lored Q$. 
Since $ \derlmuSN `G |- `m`a.[`a]Q : `C\arr `y | `D $, there exists $\cont{D}$ such that $`C \seqS \cont{D}$, and $ \derlmuSN `G |- Q : \cont{D}\arr `y | `a{:}`C,`D_0 $. 
Then by induction there exist $`G'$, $`C'$, $\cont{D}'$, and $`D'$ such that 
 $
\derlmuSN `G' |- Q : \cont{D}'\arr `y | `a{:}`C',`D' .
 $ 
By Lem.\,\ref{SN seqS admissible}
$ \derlmuSN `G' |- Q : \cont{D}'\arr `y | `a{:}`C' \inter \cont{D}',`D' $ 
and then by rule $(`m)$ we get 
$ \derlmuSN `G' |- `m`a.[`b]Q : `C\arr `y | `D' $.

 \item [{$M \same `m`a.[`b]M'$ with $`a \not= `b$}]
Then $N = `m`a.[`b]N'$ and $M' \lored N'$. 
Since $ \derlmuSN `G |- `m`a.[`b]N' : `C\arr `y | `D $, there are $`D_0$, $\cont{E}$, $\cont{D}$ such that $`D = `b{:}\cont{E},`D'_0$, $\cont{E} \seqS \cont{D}$, and $ \derlmuSN `G |- N' : \cont{D}\arr `y | `a{:}`C,`b{:}\cont{E},`D_0 $. 
Then by induction there exist $`G'$, $`C'$, $\cont{E}'$, $\cont{D}'$, and $`D'$ such that 
 $ 
\derlmuSN `G' |- Q : \cont{D}'\arr `y | `a{:}`C',`b{:}\cont{E}',`D' 
 $. 
Then by Lem.\,\ref{SN seqS admissible} we have
$ \derlmuSN `G' |- Q : `C'\arr `y | `a{:}`C',`b{:}\cont{E}'\inter\cont{D}',`D' $ 
and 
$ \derlmuSN `G' |- `m`a.[`b]Q : `C'\arr `y | `b{:}\cont{E}'\inter\cont{D}',`D' $
follows by rule $(`m')$.
\qed
	
 \end{description}
 \end{proof}

We can now show that all strongly normalisable terms are exactly those typeable in $\TurnlmuSN$.

 \begin{theorem} \label {SN theorem}
$ \Exists `G$, $`D$, $`A \Pred[ \derlmuSN `G |- M : `A | `D ] \Iff M$ is strongly normalisable with respect to \hbox{$\redbmu$.}
 \end{theorem}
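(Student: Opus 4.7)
I prove both directions separately, relying on the machinery developed above.

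\emph{Forward direction ($\Rightarrow$).} Suppose $\D \dcol \derlmuSN `G |- M : `A | `D $. By Lem.\,\ref{w-free properties}\ref{full to omega-free}, the same derivation establishes $\D \dcol \derlmuS `G |- M : `A | `D $, so by Thm.\,\ref{FSCD lemma}\ref{typeable implies SN}, $\SN(\D)$ with respect to derivation reduction $\derred$. The crucial observation is that $\derred$ contracts precisely those redexes whose type in the derivation is different from $`w$, and since $\TurnlmuSN$ forbids $`w$ altogether, every subterm occurrence of $M$ in $\D$ is typed with a non-$`w$ type. Hence every $\redbmu$-step from $M$ lifts to a $\derred$-step on $\D$, so an infinite $\redbmu$-sequence from $M$ would induce an infinite $\derred$-sequence from $\D$, contradicting $\SN(\D)$. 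Therefore $M$ is strongly normalisable with respect to $\redbmu$.

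\emph{Reverse direction ($\Leftarrow$).} Assuming $M$ is strongly normalisable, let $\|M\|$ denote the length of the longest $\redbmu$-reduction from $M$; this is finite. I proceed by lexicographic induction on the pair $(\|M\|,|M|)$, where $|M|$ is the syntactic size of $M$. If $M$ is in $\redbmu$-normal form, then $M$ is a $\bottom$-free approximate normal form, so Lem.\,\ref{w-free normal forms}\ref{from normal form} immediately yields $`G, `A, `D$ with $\derlmuSN `G |- M : `A | `D $. Otherwise, let $M \lored N$ by contracting $\Lor{M}$; since $\|N\| < \|M\|$, the IH supplies a typing $\derlmuSN `G_N |- N : `C\arr`y | `D_N $. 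If $\Lor{M} = PQ$, the operand $Q$ is a proper subterm of $M$, hence itself strongly normalisable with $\|Q\| \leq \|M\|$ and $|Q| < |M|$, so the IH (at a strictly smaller lexicographic pair) supplies a typing of $Q$ in $\TurnlmuSN$. Lem.\,\ref{l.o. lemma} then combines these to yield $`G', `D', `C'$ with $\derlmuSN `G' |- M : `C'\arr`y | `D' $, as required.

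\emph{Main obstacle.} The delicate point is the reverse direction: Lem.\,\ref{l.o. lemma} requires a separate typing for the operand $Q$ of an application redex, which cannot in general be extracted from a typing of the reduct $N$ because $Q$ may be erased by contraction (e.g.\ under a vacuous $`l$- or $`m$-binder). The lexicographic measure on reduction length and term size is precisely what enables the induction hypothesis to furnish typings simultaneously for the reduct $N$ and the subterm $Q$. The forward direction is essentially immediate once one recognises that eliminating $`w$ forces $\redbmu$ and $\derred$ to coincide, so that strong normalisation at the level of derivations (Thm.\,\ref{FSCD lemma}\ref{typeable implies SN}) transfers directly to the term.
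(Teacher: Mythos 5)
Your proof is correct and follows essentially the same route as the paper's: the forward direction via Lem.\,\ref{w-free properties}\ref{full to omega-free}, Thm.\,\ref{FSCD lemma}\ref{typeable implies SN}, and the observation that in the absence of $`w$ every term redex corresponds to a derivation redex; the reverse direction by induction on the length of the longest reduction sequence, with Lem.\,\ref{w-free normal forms}\ref{from normal form} for the base case and Lem.\,\ref{l.o. lemma} for the inductive step. The only cosmetic difference is your lexicographic measure for obtaining a typing of the operand $Q$, where the paper instead observes directly that $\redsize(Q) < \redsize(M)$ since $Q$ is a proper subterm of a redex of $M$.
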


 \begin{Proof}
 \begin{description} 

 \item [$\Then$]
If $\D \dcol \derlmuSN `G |- M : `A | `D $, then by Lem.\,\ref {w-free properties}\ref {full to omega-free}, also $\D \dcol \derlmuS `G |- M : `A | `D $.
Then, by Thm.\,\ref{FSCD lemma}\ref {typeable implies SN}, $\D$ is strongly normalisable with respect to $\derred$.
Since $\D$ contains no $`w$, all redexes in $M$ correspond to redexes in $\D$, a property that is preserved by derivation reduction (it does not introduce $`w$).
So also $M$ is strongly normalisable with respect to $\redbmu$.

\Comment{
 \item [$\If$]
By induction on the maximum of the lengths of \It{lor}-reduction sequences for a strongly normalisable term $M$ to its normal form (denoted by $\redsize(M)$).

 \begin{enumerate} 
 \item
If $\redsize(M) = 0$, then $M$ is in normal form, and by Lem.\,\ref {w-free normal forms}\ref {from normal form}, there exist $`G,`D$ and $`A$ such that
$\derlmuSN `G |- M : `A | `D $.

 \item
If $\redsize(M)\beq 1$, so $M$ contains a redex, then let $M \lored N$ by contracting the redex $PQ$.
Then $\redsize(N) < \redsize(M)$, and $\redsize(Q) < \redsize(M)$ (since $Q$ is a proper sub-term of a redex in $M$), so by induction, for some $`G$,  $`G'$, $`D$, $`D'$, $`A$, and $`B$, we have $\derlmuSN `G |- M : `A | `D $ and $\derlmuSN `G' |- Q : `B | `D' $.
Then, by Lem.\,\ref {l.o. lemma}, there exist $`G_1$, $`D_1$, $\type{C}$ such that $\derlmuSN `G_1 |- M : \type{C} | `D_1 $.
If the redex is $`m`a.[`b]`m`g.[`d]P$, then $\redsize(`m`a.[`b]`m`g.[`d]P) > \redsize({`m`a.([`d]P)[`b/`g]})$, so the result follows by induction.
\qed

 \end{enumerate}
}

 \item [$\If$]
By induction on the maximum of the lengths of reduction sequences for a strongly normalisable term $M$ to its normal form (denoted by $\redsize(M)$).

 \begin{enumerate} 
 \item
If $\redsize(M) = 0$, then $M$ is in normal form, and by Lem.\,\ref {w-free normal forms}\ref {from normal form}, there exist $`G$, $`D$ and $`A$ such that $\derlmuSN `G |- M : `A | `D $.

 \item
If $\redsize(M)\beq 1$, so $M$ contains a redex, then let $M \lored N$ by contracting the redex $PQ$.
Then $\redsize(N) < \redsize(M)$, and $\redsize(Q) < \redsize(M)$ (since $Q$ is a proper sub-term of a redex in $M$), so by induction, for some $`G$,  $`G'$, $`D$, $`D'$, $`A$, and $`B$, we have $\derlmuSN `G |- M : `A | `D $ and $\derlmuSN `G' |- Q : `B | `D' $.
Then, by Lem.\,\ref {l.o. lemma}, there exist $`G_1$, $`D_1$, $\type{C}$ such that $\derlmuSN `G_1 |- M : \type{C} | `D_1 $.
If the redex is $`m`a.[`b]`m`g.[`d]P$, then $\redsize(`m`a.[`b]`m`g.[`d]P) > \redsize({`m`a.([`d]P)[`b/`g]})$, so the result follows by induction.
\qed

 \end{enumerate}
 \end{description}
 \end{Proof}

\section*{Conclusions}
We have studied a strict version of the intersection type system for $\lmu$ of \cite{Bakel-Barbanera-Liguoro-LMCS'15}.
Using the fact that derivation reduction (a kind of cut-elimination) is strongly normalisable, 
we have shown an approximation theorem, and from that given a characterisation of head normalisation. 
We have also shown that the system without the type constant $`w$ characterises the strongly normalisable terms and that we can characterise normalisation as well.

 \bibliography{references}

\ITRS{}{
\let\proof\oldproof
\let\endproof\oldendproof
\newpage \input{proofsappendix}
}

\Comment{
\newpage
\input{proofsappendix}
 \bibliography{references}

\begin{thebibliography}{10}
\providecommand{\bibitemdeclare}[2]{}
\providecommand{\surnamestart}{}
\providecommand{\surnameend}{}
\providecommand{\urlprefix}{Available at }
\providecommand{\url}[1]{\texttt{#1}}
\providecommand{\href}[2]{\texttt{#2}}
\providecommand{\urlalt}[2]{\href{#1}{#2}}
\providecommand{\doi}[1]{doi:\urlalt{http://dx.doi.org/#1}{#1}}
\providecommand{\bibinfo}[2]{#2}

\bibitemdeclare{article}{Bakel-TCS'92}
\bibitem{Bakel-TCS'92}
\bibinfo{author}{S.~\surnamestart van Bakel\surnameend} (\bibinfo{year}{1992}):
  \emph{\bibinfo{title}{{Complete restrictions of the Intersection Type
  Discipline}}}.
\newblock {\sl \bibinfo{journal}{Theoretical Computer Science}}
  \bibinfo{volume}{102}(\bibinfo{number}{1}), pp. \bibinfo{pages}{135--163},
  \doi{10.1016/0304-3975(92)90297-S}.

\bibitemdeclare{article}{Bakel-TCS'95}
\bibitem{Bakel-TCS'95}
\bibinfo{author}{S.~\surnamestart van Bakel\surnameend} (\bibinfo{year}{1995}):
  \emph{\bibinfo{title}{{Intersection Type Assignment Systems}}}.
\newblock {\sl \bibinfo{journal}{Theoretical Computer Science}}
  \bibinfo{volume}{151}(\bibinfo{number}{2}), pp. \bibinfo{pages}{385--435},
  \doi{10.1016/0304-3975(95)00073-6}.

\bibitemdeclare{article}{Bakel-NDJFL'04}
\bibitem{Bakel-NDJFL'04}
\bibinfo{author}{S.~\surnamestart van Bakel\surnameend} (\bibinfo{year}{2004}):
  \emph{\bibinfo{title}{{Cut-Elimination in the Strict Intersection Type
  Assignment System is Strongly Normalising}}}.
\newblock {\sl \bibinfo{journal}{Notre Dame journal of Formal Logic}}
  \bibinfo{volume}{45}(\bibinfo{number}{1}), pp. \bibinfo{pages}{35--63},
  \doi{10.1305/ndjfl/1094155278}.

\bibitemdeclare{article}{Bakel-TCS'08}
\bibitem{Bakel-TCS'08}
\bibinfo{author}{S.~\surnamestart van Bakel\surnameend} (\bibinfo{year}{2008}):
  \emph{\bibinfo{title}{{The Heart of Intersection Type Assignment;
  Normalisation proofs revisited}}}.
\newblock {\sl \bibinfo{journal}{Theoretical Computer Science}}
  \bibinfo{volume}{398}, pp. \bibinfo{pages}{82--94},
  \doi{10.1016/j.tcs.2008.01.020}.

\bibitemdeclare{inproceedings}{Bakel-ITRS'10}
\bibitem{Bakel-ITRS'10}
\bibinfo{author}{S.~\surnamestart van Bakel\surnameend} (\bibinfo{year}{2010}):
  \emph{\bibinfo{title}{{Sound and Complete Typing for $\lambda\mu$}}}.
\newblock In: {\sl \bibinfo{booktitle}{Proceedings of 5th International
  Workshop {\em Intersection Types and Related Systems} (ITRS'10), Edinburgh,
  Scotland}}, {\sl \bibinfo{series}{Electronic Proceedings in Theoretical
  Computer Science}}~\bibinfo{volume}{45}, pp. \bibinfo{pages}{31--44},
  \doi{10.4204/EPTCS.45.3}.

\bibitemdeclare{article}{Bakel-ACM'11}
\bibitem{Bakel-ACM'11}
\bibinfo{author}{S.~\surnamestart van Bakel\surnameend} (\bibinfo{year}{2011}):
  \emph{\bibinfo{title}{{Strict intersection types for the Lambda Calculus}}}.
\newblock {\sl \bibinfo{journal}{ACM Computing Surveys}} \bibinfo{volume}{43},
  pp. \bibinfo{pages}{20:1--20:49}, \doi{10.1145/1922649.1922657}.

\bibitemdeclare{unpublished}{Bakel-FSCD'16}
\bibitem{Bakel-FSCD'16}
\bibinfo{author}{S.~\surnamestart van Bakel\surnameend} (\bibinfo{year}{2016}):
  \emph{\bibinfo{title}{{Approximation and (Head) Normalisation for
  $\lambda\mu$ using Strict Intersection Types}}}.
\newblock
  \urlprefix\url{http://www.doc.ic.ac.uk/~svb/Research/Papers/Lmu-Strict.pdf}.

\bibitemdeclare{inproceedings}{Bakel-Barbanera-Liguoro-TLCA'11}
\bibitem{Bakel-Barbanera-Liguoro-TLCA'11}
\bibinfo{author}{S.~\surnamestart van Bakel\surnameend},
  \bibinfo{author}{F.~\surnamestart Barbanera\surnameend} \&
  \bibinfo{author}{U.~\surnamestart de'Liguoro\surnameend}
  (\bibinfo{year}{2011}): \emph{\bibinfo{title}{{A Filter Model for
  $\lambda\mu$}}}.
\newblock In \bibinfo{editor}{L.~\surnamestart Ong\surnameend}, editor: {\sl
  \bibinfo{booktitle}{{Proceedings of \emph{10th International Conference on
  Typed Lambda Calculi and Applications} (TLCA'11)}}}, {\sl
  \bibinfo{series}{Lecture Notes in Computer Science}} \bibinfo{volume}{6690},
  \bibinfo{publisher}{Springer Verlag}, pp. \bibinfo{pages}{213--228},
  \doi{10.1007/978-3-642-21691-6\_18}.

\bibitemdeclare{inproceedings}{BakBdL-ITRS12}
\bibitem{BakBdL-ITRS12}
\bibinfo{author}{S.~\surnamestart van Bakel\surnameend},
  \bibinfo{author}{F.~\surnamestart Barbanera\surnameend} \&
  \bibinfo{author}{U.~\surnamestart de'Liguoro\surnameend}
  (\bibinfo{year}{2012}): \emph{\bibinfo{title}{{Characterisation of Strongly
  Normalising $\lambda\mu$-Terms}}}.
\newblock In: {\sl \bibinfo{booktitle}{Proceedings of 6th International
  Workshop {\em Intersection Types and Related Systems} (ITRS'12), Dubrovnik,
  Croatia, June 29th}}, {\sl \bibinfo{series}{Electronic Proceedings in
  Theoretical Computer Science}} \bibinfo{volume}{121}, pp.
  \bibinfo{pages}{31--44}, \doi{10.4204/EPTCS.121.1}.

\bibitemdeclare{article}{Bakel-Barbanera-Liguoro-LMCS'15}
\bibitem{Bakel-Barbanera-Liguoro-LMCS'15}
\bibinfo{author}{S.~\surnamestart van Bakel\surnameend},
  \bibinfo{author}{F.~\surnamestart Barbanera\surnameend} \&
  \bibinfo{author}{U.~\surnamestart de'Liguoro\surnameend}
  (\bibinfo{year}{2015}): \emph{\bibinfo{title}{{Intersection types for
  $\lambda\mu$}}}.
\newblock {\sl \bibinfo{journal}{Logical Methods in Computer Science}}.
\newblock \bibinfo{note}{To appear}.

\bibitemdeclare{inproceedings}{Bakel-Vigliotti-CLaC'14}
\bibitem{Bakel-Vigliotti-CLaC'14}
\bibinfo{author}{S.~\surnamestart van Bakel\surnameend} \&
  \bibinfo{author}{{M.G.} \surnamestart Vigliotti\surnameend}
  (\bibinfo{year}{2014}): \emph{\bibinfo{title}{A fully abstract semantics of
  $\lambda\mu$ in the $\pi$-calculus}}.
\newblock In: {\sl \bibinfo{booktitle}{Proceedings of Sixth International
  Workshop on Classical Logic and Computation 2014 {\rm (CL\&C'14)}, Vienna,
  Austria}}, {\sl \bibinfo{series}{Electronic Proceedings in Theoretical
  Computer Science}} \bibinfo{volume}{164}, pp. \bibinfo{pages}{33--47},
  \doi{10.4204/EPTCS.164.3}.

\bibitemdeclare{book}{Barendregt'84}
\bibitem{Barendregt'84}
\bibinfo{author}{H.~\surnamestart Barendregt\surnameend}
  (\bibinfo{year}{1984}): \emph{\bibinfo{title}{{The Lambda Calculus: its
  Syntax and Semantics}}}, \bibinfo{edition}{revised} edition.
\newblock \bibinfo{publisher}{North-Holland}, \bibinfo{address}{Amsterdam},
  \doi{10.2307/2274112}.

\bibitemdeclare{article}{BCD'83}
\bibitem{BCD'83}
\bibinfo{author}{H.~\surnamestart Barendregt\surnameend},
  \bibinfo{author}{M.~\surnamestart Coppo\surnameend} \&
  \bibinfo{author}{M.~\surnamestart {Dezani-Ciancaglini}\surnameend}
  (\bibinfo{year}{1983}): \emph{\bibinfo{title}{{A filter lambda model and the
  completeness of type assignment}}}.
\newblock {\sl \bibinfo{journal}{Journal of Symbolic Logic}}
  \bibinfo{volume}{48}(\bibinfo{number}{4}), pp. \bibinfo{pages}{931--940},
  \doi{10.2307/2273659}.

\bibitemdeclare{article}{Bohm'68}
\bibitem{Bohm'68}
\bibinfo{author}{C.~\surnamestart B{\"o}hm\surnameend} (\bibinfo{year}{1968}):
  \emph{\bibinfo{title}{{Alcune propiet\'a delle forme $\beta\eta$-normali nel
  $\lambda k$-calcolo}}}.
\newblock {\sl \bibinfo{journal}{Pubblicazioni 696, Instituto Nazionale per le
  Applicazioni del Calcolo. Roma}}.

\bibitemdeclare{article}{Church'36}
\bibitem{Church'36}
\bibinfo{author}{A.~\surnamestart Church\surnameend} (\bibinfo{year}{1936}):
  \emph{\bibinfo{title}{{A Note on the Entscheidungsproblem}}}.
\newblock {\sl \bibinfo{journal}{Journal of Symbolic Logic}}
  \bibinfo{volume}{1}(\bibinfo{number}{1}), pp. \bibinfo{pages}{40--41},
  \doi{10.2307/2269326}.

\bibitemdeclare{book}{Curry-Feys'58}
\bibitem{Curry-Feys'58}
\bibinfo{author}{H.B. \surnamestart Curry\surnameend} \&
  \bibinfo{author}{R.~\surnamestart Feys\surnameend} (\bibinfo{year}{1958}):
  \emph{\bibinfo{title}{{Combinatory Logic}}}.
\newblock \bibinfo{volume}{1}, \bibinfo{publisher}{North-Holland},
  \bibinfo{address}{Amsterdam}.

\bibitemdeclare{inproceedings}{deGroote'94}
\bibitem{deGroote'94}
\bibinfo{author}{Ph. \surnamestart de~Groote\surnameend}
  (\bibinfo{year}{1994}): \emph{\bibinfo{title}{{On the Relation between the
  $\lambda\mu$-Calculus and the Syntactic Theory of Sequential Control}}}.
\newblock In: {\sl \bibinfo{booktitle}{Proceedings of 5th International
  Conference on Logic for Programming, Artificial Intelligence, and Reasoning
  (LPAR'94)}}, {\sl \bibinfo{series}{Lecture Notes in Computer Science}}
  \bibinfo{volume}{822}, \bibinfo{publisher}{Springer Verlag}, pp.
  \bibinfo{pages}{31--43}, \doi{10.1007/3-540-58216-9\_27}.

\bibitemdeclare{article}{deLiguoro'16}
\bibitem{deLiguoro'16}
\bibinfo{author}{U.~\surnamestart de'\kern.5pt Liguoro\surnameend}
  (\bibinfo{year}{2016}): \emph{\bibinfo{title}{{The Approximation Theorem for
  the $\Lambda\mu$-Calculus}}}.
\newblock {\sl \bibinfo{journal}{Mathematical Structures in Computer Science}}
  \bibinfo{volume}{FirstView}, pp. \bibinfo{pages}{1--21},
  \doi{10.1017/S0960129515000286}.

\bibitemdeclare{inproceedings}{Parigot'92}
\bibitem{Parigot'92}
\bibinfo{author}{M.~\surnamestart Parigot\surnameend} (\bibinfo{year}{1992}):
  \emph{\bibinfo{title}{{An algorithmic interpretation of classical natural
  deduction}}}.
\newblock In: {\sl \bibinfo{booktitle}{Proceedings of 3rd International
  Conference on Logic for Programming, Artificial Intelligence, and Reasoning
  (LPAR'92)}}, {\sl \bibinfo{series}{Lecture Notes in Computer Science}}
  \bibinfo{volume}{624}, \bibinfo{publisher}{Springer Verlag}, pp.
  \bibinfo{pages}{190--201}, \doi{10.1007/BFb0013061}.

\bibitemdeclare{phdthesis}{Py-PhD'98}
\bibitem{Py-PhD'98}
\bibinfo{author}{W.~\surnamestart Py\surnameend} (\bibinfo{year}{1998}):
  \emph{\bibinfo{title}{{Confluence en $\lambda\mu$-calcul}}}.
\newblock \bibinfo{type}{Th\`ese de doctorat}, \bibinfo{school}{Universit\'e de
  Savoie}.

\bibitemdeclare{article}{Ronchi-Venneri'84}
\bibitem{Ronchi-Venneri'84}
\bibinfo{author}{S.~\surnamestart {Ronchi Della Rocca}\surnameend} \&
  \bibinfo{author}{B.~\surnamestart Venneri\surnameend} (\bibinfo{year}{1984}):
  \emph{\bibinfo{title}{{Principal type schemes for an extended type theory}}}.
\newblock {\sl \bibinfo{journal}{Theoretical Computer Science}}
  \bibinfo{volume}{28}, pp. \bibinfo{pages}{151--169},
  \doi{10.1016/0304-3975(83)90069-5}.

\bibitemdeclare{inproceedings}{Saurin'10}
\bibitem{Saurin'10}
\bibinfo{author}{A.~\surnamestart Saurin\surnameend} (\bibinfo{year}{2010}):
  \emph{\bibinfo{title}{{Standardization and B{\"o}hm Trees for
  $\lambda\mu$-calculus}}}.
\newblock In \bibinfo{editor}{M.~\surnamestart Blume\surnameend},
  \bibinfo{editor}{N.~\surnamestart Kobayashi\surnameend} \&
  \bibinfo{editor}{G.~\surnamestart Vidal\surnameend}, editors: {\sl
  \bibinfo{booktitle}{Functional and Logic Programming, 10th International
  Symposium, (FLOPS'10), Sendai, Japan}}, {\sl \bibinfo{series}{Lecture Notes
  in Computer Science}} \bibinfo{volume}{6009}, \bibinfo{publisher}{Springer
  Verlag}, pp. \bibinfo{pages}{134--149}, \doi{10.1007/978-3-642-12251-4\_11}.

\bibitemdeclare{article}{Streicher-Reus'98}
\bibitem{Streicher-Reus'98}
\bibinfo{author}{Th. \surnamestart Streicher\surnameend} \&
  \bibinfo{author}{B.~\surnamestart Reus\surnameend} (\bibinfo{year}{1998}):
  \emph{\bibinfo{title}{{Classical logic: Continuation Semantics and Abstract
  Machines}}}.
\newblock {\sl \bibinfo{journal}{Journal of Functional Programming}}
  \bibinfo{volume}{11}(\bibinfo{number}{6}), pp. \bibinfo{pages}{543--572},
  \doi{10.1007/BFb0026995}.

\bibitemdeclare{article}{Wadsworth'76}
\bibitem{Wadsworth'76}
\bibinfo{author}{C.P. \surnamestart Wadsworth\surnameend}
  (\bibinfo{year}{1976}): \emph{\bibinfo{title}{{The Relation Between
  Computational and Denotational Properties for Scott's D$_{\infty}$-Models of
  the Lambda-Calculus}}}.
\newblock {\sl \bibinfo{journal}{SIAM Journal on Computing}}
  \bibinfo{volume}{5}(\bibinfo{number}{3}), pp. \bibinfo{pages}{488--521},
  \doi{10.1137/0205036}.

\end{thebibliography}
}

 \end{document}
